\documentclass[%
 aip,
 amsmath,amssymb,
 reprint,%
jmp]{revtex4-2}

\usepackage{graphicx}
\usepackage{dcolumn}
\usepackage{bm}
\usepackage{bbm}
\usepackage[mathlines]{lineno}

\usepackage{csquotes}
\usepackage{amsthm}
\usepackage{amsmath}
\usepackage{amssymb}
\usepackage{mathrsfs}%
\usepackage{mathtools}
\usepackage{mathptmx}
\usepackage{etoolbox}
\usepackage{xcolor}
\usepackage{cancel}
\usepackage{comment}

\makeatletter
\def\@email#1#2{%
 \endgroup
 \patchcmd{\titleblock@produce}
  {\frontmatter@RRAPformat}
  {\frontmatter@RRAPformat{\produce@RRAP{*#1\href{mailto:#2}{#2}}}\frontmatter@RRAPformat}
  {}{}
}%
\makeatother

\newtheorem{theorem}{Theorem}[section]
\newtheorem*{definition}{Definition}
\newtheorem{corollary}{Corollary}[theorem]
\newtheorem{proposition}{Proposition}[section]
\newtheorem{lemma}{Lemma}[section]

\begin{document}

\preprint{AIP/123-QED}

\title[Invariant Measures in Hamiltonian Systems: The Analytical Foundations of Statistical Physics]{Invariant Measures in Hamiltonian Systems: The Analytical Foundations of Statistical Physics}
\author{{Luis A.} {Cede\~no-P\'erez} and Alexis E. López-Velázquez}

\email{luisacp@ciencias.unam.mx, alexisvelazquez@ciencias.unam.mx}

\affiliation{ Instituto de Ciencias Nucleares, Universidad Nacional Aut\'onoma de M\'exico, AP  70543, Mexico City, Mexico }

\date{\today}

\begin{abstract}
We construct a measure in the hamiltonian function level sets that is invariant under the hamiltonian flow for short times and flow preserving for arbitrarily long times. This allows a probabilistic approach to the study of hamiltonian systems, in the space of states with fixed energy. We prove that this measure generates the microcanonical partition function employed in physics and show that it can be transformed into the canonical partition function in an asymptotic limit, hence reproducing classical Statistical Physics. We also argue that this gives an alternative solution to Simon's second problem.
\end{abstract}

\keywords{}

\pacs{}

\maketitle

\tableofcontents

\section{Introduction}

\subsection{Probability in Hamiltonian Systems}

Hamiltonian systems constitute the basis for Classical Mechanics and are of great interest to both physicists and mathematicians, as shown by \cite{Arnold}, \cite{Marsden} and \cite{Saletan}. The interest in studying them through the lens of probability is twofold. First, despite the large quantity of desirable properties these systems possess, exact solutions can not always be found, hence the necessity of qualitative descriptions of their solutions. Probability is particularly well suited for this purpose, as instead of asking if all solutions of a system have a property we can determine the probability of solutions having said property. Second, probability in Hamiltonian systems lies at the heart of classical Statistical Physics, paving the way for the development of Thermodynamics. Hence, a solid mathematical foundation for probability in Hamiltonian systems results interesting for the mathematical study of Thermodynamics. Particularly, Geometric Thermodynamics has been studied in recent years in relation with General Relativity, in the form of thermodynamics of black holes or thermodynamics of the universe (see \cite{ANAYA} and \cite{LADINO}), as well as its relationship with Contact Geometry (see \cite{TermoContacto}). Probabilistic approaches to differential equations have become of interest in recent years due to the relation to Stochastic Differential Equations (see \cite{ProbODE}) and path integrals (see \cite{FuncMeas}), as well as to study the uncertainty introduced by numerical solutions (see \cite{ProbUncert}), among others.

It is not surprising then that the natural framework for developing probability measures for Hamiltonian systems is geometric (or geometric-measure-theoretical) in nature. For example, the Poincaré-Cartán invariant forms, which are simply the powers of the symplectic form (see \cite{Marsden}), generate invariant measures in even dimensional submanifolds. However, these invariants are not enough for the probabilistic description of hamiltonian systems, since they are either too coarse or too fine for this purpose. To see what we mean by this, let's state the principal problem we will consider in this work.

\subsection{Invariant Measures}

Consider a hamiltonian $H$ in $\mathbb{R}^{2n}$ and $\Phi_{t}$ the hamiltonian flow. If $H$ does not depend explicitly on time, the dynamics of the system take place in constant energy sets. If $H$ is smooth, then the constant energy sets (of regular values) are regular manifolds of dimension $2n - 1$. Hence, any acceptable probability measure must measure $2n - 1$ dimensional sets. Since the Poincaré-Cartán invariants are of even dimension, none of them are adequate for this purpose.

What other properties, other than dimensionality, should a probability measure have to be adequate to study probability in Hamiltonian systems? Hamiltonian systems describe the time evolution of a classical system, such evolution is represented precisely by the hamiltonian flow $\Phi_{t}$. Given any event $A$ (a Borel set) a probability measure $\mu$ assigns a probability $\mu(A)$ to the event. Since the hamiltonian flow determines the time evolution of the system, the event $A$ can be evolved in time to obtain an event $\Phi_{t}(A)$ with probability $\mu(\Phi_{t}(A))$. Since the hamiltonian does not depend explicitly on time, we expect the averaged or \textquote{macroscopic} quantities should not depend on time either, reflecting some sort of stationarity of probability. Hence, the most natural condition the measure $\mu$ should satisfy is
\begin{equation*}
    \mu(\Phi_{t}(A)) = \mu(A).
\end{equation*}
Measures that satisfy this condition are known as \textbf{invariant measures}. In the context of Thermodynamics, such measures represent that the system is in thermodynamic equilibrium. Mathematically, invariance simply requires that the probability of an event is independent of time and thus averages of quantities are real numbers instead of functions of time, respecting the temporal symmetry of the system.

In this work, we will construct an invariant measure $\omega_{E}$ in $H^{-1}(E)$, which we will call the \textbf{microcanonical measure}, which is sensible to $2n - 1$ dimensional events, and thus, solves the previous problem. The name \textquote{microcanonical} comes from Statistical Physics, as will be explained in the following subsection. The basis of our construction is Liouville's theorem, which states that the $2n$ dimensional Lebesgue measure $\lambda$ is invariant under the hamiltonian flow. We will use the Coarea Formula from Differential Geometry (or Geometric Measure Theory) to \textquote{differentiate} the Lebesgue measure and obtain the measure $\omega_{E}$. The formulas that make this precise are \eqref{Eqdefomega} and \eqref{MicroDerivada}. Invariant measures have also been studied for dissipative dynamical systems (see \cite{InvDissipative}), in which the previous considerations do not apply. Our study of invariance will be divided into results of short and long times. Short times are those in which the hamiltonian flow is a diffeomorphism and long times are those in which said condition may not hold anymore.

\subsection{The Foundations of Statistical Physics}

Classical Statistical Physics is based on the quantity $\Omega(E)$ known as the \textbf{microcanonical partition function}, which measures the \textquote{number of states} with energy $E$. Botlzmann's Principle then states that the entropy $S$ of the system can be computed as
\begin{equation*}
    S = k_{B}\;ln\;\Omega(E),
\end{equation*}
where $k_{B}$ is a constant known as Boltzmann's constant. The intuition is that the more microscopic states of energy $E$ there are, the less likely it is that the system takes another macroscopic state. Hence, the entropy $S$ is unlikely to decrease, preserving the second law of Thermodynamics, at least on average.

The problem is then how to compute $\Omega(E)$. Since $\Omega(E)$ measures the number of states with energy $E$ the usual expressions found on physics literature are either
\begin{equation*}
    \Omega(E) = \int \delta(E - H(q,p))\;dq\;dp
\end{equation*}
or
\begin{equation*}
    \Omega(E) = \int \, \mathbbm{1}_{\{H(q,p)=E\}}\;dq\;dp.
\end{equation*}
While the intent of both expressions is clear, none of them are properly defined in a mathematically rigorous sense. See \cite{LegendreTermo} for a summary of classical Statistical Physics or \cite{Pathria} for a detailed account.

While the physical foundations of Statistical Physics have been widely discussed and studied (see \cite{FoundationSP1} and \cite{FoundationSP2}) their mathematical foundations are not so well established. For example, in \cite{Simon} it is argued that most formulations of classical Statistical Mechanics require the Ergodic Hypothesis, which is known not to hold in many classical systems. This is part Simon's second problem, which we will discuss in the following subsection.

Our construction of the microcanonical measure $\omega_{E}$ will provide a rigorous expression for the microcanonical partition function and set a rigorous foundation for classical Statistical Physics. We will show that if we define $\Omega(E)$ as the normalization constant required to make $\omega_{E}$ a probability measure then $\Omega(E)$ will reproduce classical Statistical Physics. To show this we will transform Boltzmann's Principle to the Helmholtz free energy representation. The required result is that if $F$ is the Helmholtz free energy of the system then
\begin{equation*}
    F = -k_{B}T\;ln\;\mathcal{Z},
\end{equation*}
where $T$ is the temperature and
\begin{equation*}
    \mathcal{Z} = \int e^{-\beta H(q,p)}\;dq\;dp
\end{equation*}
is known as the \textbf{canonical partition function}. The equality is not exact, but rather an asymptotic equality. This is enough to show that the theory we will develop coincides with classical Statistical Physics, since most physical results are obtained from the transformed Boltzmann Principle. Thus our construction is not only mathematically relevant, but also gives a mathematically rigorous foundation for classical Statistical Physics.

\subsection{Simon's Second Problem}

In \cite{Simon} Barry Simon listed 15 problems he considered fundamental for Mathematical Physics to solve. Said problems range from old problems of classical theories, like Classical Mechanics, Thermodynamics, and Fluid Dynamics, to more famous problems of the younger disciplines of Quantum Mechanics and General Relativity. Most of Simon's problems remain unsolved to this day, since only problems 4, 5 and 7 have been solved to this day (see \cite{Puig}, \cite{Avila} and \cite{Kiselev}).

In the statement of his second problem, Simon argues that classical Statistical Physics is based on the Ergodic Hypothesis, which is not satisfied in many systems, as shown by KAM Theory. The problem is then to establish ergodicity for certain systems in order to formulate Statistical Physics correctly. Our work shows that classical Statistical Physics is well formulated whether the Ergodic Hypothesis holds or not, hence, offers a solution to Simon's core problem. We do not solve the specific problem, ergodicity of certain systems, however, this problem arises from the intent of formulating Statistical Mechanics through the Ergodic Hypothesis, hence the main purpose is actually to give a sound foundation to classical Statistical Physics.

\section{Invariant Measures in Hamiltonian Systems}

In this section we construct the microcanonical measure $\omega_{E}$, show that it is invariant under the hamiltonian flow and develop simple methods for the computation of its normalization constant $\Omega(E)$. Invariance results will be divided in those for short times, in which the hamiltonian flow is a diffeomorphism, and long times, in which the hamiltonian flow may fail to be a diffeomorphism.

\subsection{Microcanonical Measure}\label{SubsecMicro}

Consider the space $\mathbb{R}^{2n}$ consisting of points of the form $(q,p)$, where $q,p\in\mathbb{R}^{n}$. Let $H\colon \mathbb{R}^{2n} \to \mathbb{R}$ be a $C^{2}$ hamiltonian function. We consider the stationary hamiltonian system defined by $H$, that is,
\begin{align*}
    \dot{q} &= \partial_{p}H\\
    \dot{p} &= -\partial_{q}H.
\end{align*}
The flow of the previous system is known as the \textbf{hamiltonian flow}, which we will denote by $\Phi_{t}$. By the fundamental theorem of flows (see \cite{LeeDif}) the flow is defined in a maximal temporal domain and we can consider the largest symmetric neighbourhood of zero $(-\epsilon,\epsilon)$ in which the flow is defined. By the fundamental theorem of flows $\epsilon > 0$ and $\{\Phi_{t}\;|\;t\in (-\epsilon,\epsilon)\}$ is a family of diffeomorphisms. If one assumes the level sets of $H$ to be compact (which happens if $H$ is bounded from below and coercive) then $\Phi_{t}$ is defined for all times. We wil not make this assumption throughout the work.

Since we are assuming that $H$ does not depend explicitly on time, the conservation of energy is valid, which means that the dynamics of the system take place in constant energy sets $H^{-1}(E)$. The conservation of energy can be written in terms of the hamiltonian flow and the constant energy sets as
\begin{equation}\label{eq:cons.energia}
    \Phi_t(H^{-1}(E))\subseteq H^{-1}(E).
\end{equation}
In order to do probability on the stationary hamiltonian system we require that the probability of an event does not depend on time, hence we make the following definition.

\begin{definition}
Let $\mu$ be a measure in the Borel subsets of $H^{-1}(E)$. We will say that $\mu$ is \textbf{invariant under the hamiltonian flow} if for any Borel set $A$ and every time $t$ the equation
\begin{equation*}
    \mu(\Phi_{t}(A)) = \mu(A)
\end{equation*}
is satisfied whenever $\Phi_{t}(A)$ is also a Borel set. We say that $\mu(A)$ is invariant under the hamiltonian flow if
\begin{equation*}
    \mu(\Phi_{t}(A)) = \mu(A)
\end{equation*}
for every $t$ such that $\Phi_{t}(A)$ is a Borel set.
\end{definition}

Note that for short times, that is $t\in (-\epsilon,\epsilon)$, the function $\Phi_{t}$ is a diffeomorphism. Hence, $\Phi_{t}(A)$ is always a Borel set whenever $A$ is a Borel set, thus this condition is redundant in this case and only the equaility of measures needs to be verified. For long times this may fail, leading to the parallel concept of flow preservation, which will be discussed in the following section. For this reason this subsection deals only with short times.

The objetive is to construct an invariant measure under the hamiltonian flow in the constant energy sets that quantifies the microstates $(q,p)$ contained in each Borel subset. The basic result in this direction is the following.

\begin{theorem}[Liouville's Theorem]\label{TeoLiouville}
The Lebesgue measure $\lambda$ of $\mathbb{R}^{2n}$ is invariant under the hamiltonian flow, in the following sense: for every $A$ Borel set of $\mathbb{R}^{2n}$ the equation
\begin{equation*}
    \lambda(\Phi_{t}(A)) = \lambda(A)
\end{equation*}
is satisfied for every time $t$ such that $\Phi_{t}(A)$ is a Borel set.
\end{theorem}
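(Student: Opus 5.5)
The plan is to prove the statement first for short times and then extend it to arbitrary times using the group property of the flow, which splits a long evolution into short ones.

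\emph{Short times.} For $t\in(-\epsilon,\epsilon)$, $\Phi_t$ is a $C^1$ diffeomorphism of $\mathbb{R}^{2n}$, since $H\in C^2$ makes the vector field $X_H=(\partial_pH,-\partial_qH)$ of class $C^1$. The change of variables formula then gives
\begin{equation*}
\lambda(\Phi_t(A))=\int_A |\det D\Phi_t(x)|\,dx .
\end{equation*}
So it suffices to show $J(t,x):=\det D\Phi_t(x)\equiv 1$. I would differentiate the variational equation $\frac{d}{dt}D\Phi_t(x)=DX_H(\Phi_t(x))\,D\Phi_t(x)$, with $D\Phi_0=I$, using Jacobi's formula. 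This yields
\begin{equation*}
\partial_t J(t,x)=\operatorname{tr}\bigl(DX_H(\Phi_t(x))\bigr)\,J(t,x).
\end{equation*}
Here $\operatorname{tr}DX_H=\sum_i(\partial_{q_i}\partial_{p_i}H-\partial_{p_i}\partial_{q_i}H)=0$ by equality of mixed partials, which holds because $H\in C^2$. Hence $J$ is constant in $t$, and $J(0,x)=1$.

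\emph{Long times.} If the flow is not globally defined, $\Phi_t$ lives on the open set $\mathcal{D}_t$ of points whose maximal solution exists up to time $t$, and $\Phi_t(A)$ means $\Phi_t(A\cap\mathcal{D}_t)$. On $\mathcal{D}_t$, $\Phi_t$ is still a diffeomorphism onto $\mathcal{D}_{-t}$ (fundamental theorem of flows). The determinant computation above is pointwise along trajectories, so it works on all of $\mathcal{D}_t$ for any $t$, and change of variables gives the result directly. As an alternative, one can write $\Phi_t=\Phi_{t/N}\circ\cdots\circ\Phi_{t/N}$ on $\mathcal{D}_t$ with $N$ large and apply the short-time result. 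That route needs care because $\epsilon$ may depend on the region, but compact exhaustion of $\mathcal{D}_t$ and monotone convergence handle this.

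\textbf{Main obstacle.} The main difficulty is the long-time case, specifically the meaning of the hypothesis \enquote{$\Phi_t(A)$ is a Borel set} when $\Phi_t$ is only partially defined. I would fix the interpretation above. Since $\Phi_t$ restricted to $\mathcal{D}_t$ is a homeomorphism onto its image, Borel sets automatically go to Borel sets, so the hypothesis is harmless. The actual work is checking that $\mathcal{D}_t$ is open and $\Phi_t$ is $C^1$ there, and both facts are standard.
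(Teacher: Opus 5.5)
Your proof is correct and is essentially the classical Jacobian/divergence argument: $\partial_t\det D\Phi_t=\operatorname{tr}(DX_H)\det D\Phi_t$ with $\operatorname{tr}DX_H=0$, which is the proof the paper cites from Arnold rather than reproducing, and your pointwise version on $\mathcal{D}_t$ already handles every $t$, so the separate short-time step (which assumes a uniform $\epsilon>0$ that an incomplete flow need not have) can be dropped. The one correction is to your convention $\Phi_t(A):=\Phi_t(A\cap\mathcal{D}_t)$, which gives $\lambda(\Phi_t(A))=\lambda(A\cap\mathcal{D}_t)$ rather than $\lambda(A)$, so the equality must be stated for $A\subseteq\mathcal{D}_t$; for example, for $H=-qp^2$ and $t>0$ one has $\mathcal{D}_t=\{p<1/t\}$, and any set of positive measure in $\{p\geq 1/t\}$ is sent to $\emptyset$.
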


See \cite{Arnold} for the classical proof, \cite{LeeDif} or \cite{Marsden} for geometric proofs and \cite{LeonSimon} for the Geometric Measure Theory proof. The previous result seemingly trivializes the posed problem; since $H^{-1}(E)$ is a Borel set, the Borel sets of $H^{-1}(E)$ are once again Borel sets of $\mathbb{R}^{2n}$, hence the restricted measure $\lambda\restriction_{Bo(H^{-1}(E))}$ is an invariant measure on $H^{-1}(E)$. However, if $E$ is a regular value of $H$ then $H^{-1}(E)$ is a codimension $1$ submanifold of $\mathbb{R}^{2n}$, hence
\begin{equation*}
    \lambda(H^{-1}(E)) = 0,
\end{equation*}
thus, if $A\in Bo(H^{-1}(E))$ then
\begin{equation*}
    \lambda(A) \leq \lambda(H^{-1}(E)) = 0,
\end{equation*}
and the restricted measure $\lambda\restriction_{Bo(H^{-1}(E))}$ is the trivial measure.

The required invariant measure must then be finer than the Lebesgue measure and need only be defined in $H^{-1}(E)$. Liouville's Theorem, although flawed for our purpose, will be instrumental for proving invariance of the following measure.

\begin{definition}
Let $E\in\mathbb{R}$ be a regular value of $H$, hence $H^{-1}(E)$ is a codimension $1$ submanifold with volume measure $\mu_{E}$ (see \cite{LeeDif} and \cite{LeeGeo}). We define
\begin{equation*}
    \Omega(E) = \int_{H^{-1}(E)} \frac{d\mu_{E}}{|\nabla H|}.
\end{equation*}
If $\Omega(E) < \infty$, we define the \textbf{microcanonical measure} $\omega_{E}$ in the Borel subsets of $H^{-1}(E)$ as
\begin{equation}\label{Eqdefomega}
    \omega_{E}(A) = \frac{1}{\Omega(E)} \int_{A} \frac{d\mu_{E}}{|\nabla H|}.
\end{equation}
\end{definition}

It is immediate that $\omega_{E}$ is indeed a measure, and is actually a probability measure. Rademacher's theorem (see \cite{Evans}) implies that the same definition can be done if $H$ is only a Lipschitz function, replacing the volume measure $\mu_{E}$ by a Hausdorff measure. For the sake of simplicity, however, we will assume that $H$ is $C^{2}$ instead of only Lipschitz. Also note that $\omega_{E}$ is not generated by a Poincaré-Cartán Invariant (see \cite{Marsden}) since $\omega_{E}$ is generated by an odd degree form.

Note that the non-normalized microcanonical measure is trivial if and only if the normalization constant $\Omega(E)$ is null. If $H$ is smooth then this can only happen if $\mu_{E}(H^{-1}(E)) = 0$. Hence we will assume that $0 < \mu_{E}(H^{-1}(E)) < \infty$ and $0 < \Omega(E) < \infty$. The normalization constant $\Omega(E)$ will be very important later on.

Our main interest is to prove that $\omega_{E}$ is invariant under the hamiltonian flow, thus providing the desired measure. In this subsection we will work the case for short times, that is, when $t\in (-\epsilon,\epsilon)$, in which case $\Phi_{t}$ is a diffeomorphism. We leave the case of long times for the following subsection. The key result for establishing invariance is the following.

\begin{theorem}[Coarea Formula]\label{CoareaFormula}
If $f\colon \mathbb{R}^{2n} \to \mathbb{R}$ is a Lebesgue integrable function then
\begin{equation*}
    \int_{H^{-1}([a,b])}f\;d\lambda = \int_{a}^{b}\left(\int_{H^{-1}(E)}\frac{f}{|\nabla H|}\;d\mu_{E}\right)\;dE.
\end{equation*}
\end{theorem}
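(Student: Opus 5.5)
To prove the Coarea Formula as stated I will assume throughout that every $E\in[a,b]$ is a regular value of $H$, i.e.\ $\nabla H\neq 0$ on $H^{-1}([a,b])$, which is the setting in which $\mu_{E}$ and the integrand $f/|\nabla H|$ make sense. (Otherwise the statement must be read with $\mu_{E}$ replaced by the $(2n-1)$-dimensional Hausdorff measure and the critical set treated separately, as in the general Federer coarea formula; by Sard's theorem the critical values have measure zero. I will only sketch that reduction.) The strategy is local: straighten the level sets of $H$ by a change of coordinates in which $H$ becomes one of the coordinates, compute the Jacobian of this change, and apply Fubini. A partition of unity then globalizes the identity.

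\textbf{Step 1 (local coordinates).} Fix $x_{0}\in H^{-1}([a,b])$. Since $\nabla H(x_{0})\neq 0$, after a rotation we may assume $\partial_{x_{2n}}H(x_{0})\neq 0$. The map $\Psi(x)=(x_{1},\dots,x_{2n-1},H(x))$ is $C^{2}$ with $\det D\Psi=\partial_{x_{2n}}H\neq 0$, so by the inverse function theorem it is a diffeomorphism from a neighbourhood $U$ of $x_{0}$ onto its image. In the coordinates $(y,E)=\Psi(x)$, each slice $\{E=\text{const}\}$ is exactly $H^{-1}(E)\cap U$, parametrized by the graph map $y\mapsto (y,g(y,E))$.

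\textbf{Step 2 (Jacobian computation).} This is the heart of the proof. By the change of variables formula,
\begin{equation*}
\int_{U} f\,d\lambda=\int \frac{f\circ\Psi^{-1}}{|\partial_{x_{2n}}H|}\,dy\,dE .
\end{equation*}
On the graph $H^{-1}(E)\cap U$, the Riemannian volume element is
\begin{equation*}
d\mu_{E}=\sqrt{1+|\nabla_{y}g|^{2}}\,dy .
\end{equation*}
Implicit differentiation of $H(y,g(y,E))=E$ gives $\nabla_{y}g=-\nabla_{y}H/\partial_{x_{2n}}H$, hence
\begin{equation*}
\sqrt{1+|\nabla_{y}g|^{2}}=\frac{|\nabla H|}{|\partial_{x_{2n}}H|}.
\end{equation*}
Therefore
\begin{equation*}
\frac{dy}{|\partial_{x_{2n}}H|}=\frac{d\mu_{E}}{|\nabla H|},
\end{equation*}
and Fubini (justified by integrability of $f$, first for $f\ge 0$ via Tonelli, then splitting $f=f^{+}-f^{-}$) yields the formula for $f$ supported in $U$.

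\textbf{Step 3 (globalization).} Cover $H^{-1}([a,b])$, which is $\sigma$-compact, by countably many such charts and choose a subordinate partition of unity $\{\rho_{k}\}$. Apply Step 2 to each $f\rho_{k}\mathbbm{1}_{H^{-1}([a,b])}$. Summing over $k$ is justified by monotone convergence for $f\ge 0$, and by linearity for general integrable $f$. A byproduct of this step is the measurability of $E\mapsto\int_{H^{-1}(E)}f/|\nabla H|\,d\mu_{E}$, which comes from Tonelli in each chart.

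The main obstacle is Step 2: identifying the induced volume $\mu_{E}$ on the level set with the factor $|\nabla H|/|\partial_{x_{2n}}H|$. Everything else is bookkeeping. The secondary difficulty is the critical set if one drops the regularity assumption. There I would first use Sard's theorem to show that critical values form a null set in $E$. I would then argue that on $\{\nabla H=0\}$ the left-hand side must be handled by the general coarea formula for Lipschitz maps (Federer), to which I would simply appeal.
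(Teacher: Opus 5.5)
The paper gives no proof of this formula. It defers to the cited literature, both for the smooth case and for the Lipschitz/Hausdorff version. Your argument makes $H$ a coordinate by the inverse function theorem, identifies $dy/|\partial_{x_{2n}}H|$ with $d\mu_{E}/|\nabla H|$ through the graph area element, applies Tonelli/Fubini, and glues with a countable partition of unity. This is the standard proof of the smooth case, and it is correct under your hypothesis that $\nabla H\neq 0$ on $H^{-1}([a,b])$.

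It actually proves more than you claim. Steps 1--3 only use $\nabla H(x_{0})\neq 0$ at the charted points, never that the \emph{values} in $[a,b]$ are regular. So with $W=\{\nabla H\neq 0\}$ the same argument gives
\begin{equation*}
\int_{H^{-1}([a,b])\cap W} f\,d\lambda=\int_{a}^{b}\left(\int_{H^{-1}(E)\cap W}\frac{f}{|\nabla H|}\,d\mu_{E}\right)dE,
\end{equation*}
where each $H^{-1}(E)\cap W$ is an embedded hypersurface even when $E$ is a critical value. This is the form the paper really needs, because it later applies the formula on $[0,E]$ with $0=\min H$, which is a critical value.

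Your sketch for the critical case cannot succeed, because the formula as stated is false in general. Federer's coarea formula reads $\int g\,|\nabla H|\,d\lambda=\int\int_{H^{-1}(E)}g\,d\mathcal{H}^{2n-1}\,dE$. The critical set $Z=\{\nabla H=0\}$ is therefore invisible on its left-hand side. Taking $g=\chi_{Z}$ shows that $\mathcal{H}^{2n-1}(Z\cap H^{-1}(E))=0$ for a.e.\ $E$. So the most Federer gives you is the identity displayed above, whose inner integral agrees for a.e.\ $E$ with the integral over all of $H^{-1}(E)$.

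Consequently the stated formula holds exactly when $\int_{Z\cap H^{-1}([a,b])}f\,d\lambda=0$, for instance when $\lambda(Z)=0$. As a counterexample, take a smooth $H$ with $H\equiv c\in(a,b)$ on a ball $B$, and $f=\chi_{B}$. Then the left side is $\lambda(B)>0$ while the right side is $0$.

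Sard's theorem does not address this at all. Moreover, for $H\colon\mathbb{R}^{2n}\to\mathbb{R}$ it requires $H\in C^{2n}$, which exceeds the standing $C^{2}$ hypothesis once $n\geq 2$. The correct general statement needs the extra hypothesis $\lambda(Z\cap H^{-1}([a,b]))=0$, which the paper's remark about Lipschitz $H$ also omits.
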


Once again, $H$ may only only be a Lipschitz function and the Coarea Formula remains valid with a Hausdorff measure in place of $\mu_{E}$. See \cite{Chavel} or \cite{Inductive} for proofs of the smooth case and \cite{Evans} or \cite{Inductive} for the Hausdorff measure result.

We now differentiate the Coarea Formula. To this end we now consider that $H$ is bounded below and has $0$ as its minimum. This is only to simplify notation and is not a relevant restriction (the following can be done by selecting any $e_{0}$ such that $e_{0} < E$ and replacing $0$ with $e_{0}$). We define the \textbf{subenergy set} $H_{E}$ as
\begin{equation*}
    H_{E} = H^{-1}([0,E]).
\end{equation*}
Differentiating the Coarea Formula an renaming variables we find that
\begin{equation*}
    \int_{H^{-1}(E_{0})}\frac{f}{|\nabla H|}\;d\mu_{E_{0}} = \frac{d}{dE}\left(\int_{H_{E}}f\;d\lambda\right)\Bigg|_{E = E_{0}}.
\end{equation*}
By choosing $f = 1$ and restricting $H$ to an open subset $A = \mathcal{A} \cap H^{-1}(E)$ of $H^{-1}(E)$, with $\mathcal{A}$ open in $\mathbb{R}^{2n}$, we find that the equation
\begin{equation*}
    \int_{A} \frac{d\mu_{E_{0}}}{|\nabla H|} = \frac{d}{dE}\lambda(\mathcal{A}\cap H_{E})\bigg|_{E = E_{0}}
\end{equation*}
is valid for open sets.

In what follows, for simplicity, we will work with non-normalized microcanonical measure
\begin{equation*}
    \omega_{E}(A) = \int_{A} \frac{d\mu_{E}}{|\nabla H|},
\end{equation*}
since normalization does not alter invariance. Hence, the differentiation of the Coarea Formula actually yields an alternative expression for the microcanonical measure of open sets:
\begin{equation}\label{MicroDerivada}
    \omega_{E_{0}}(A) = \frac{d}{dE}\lambda(\mathcal{A}\cap H_{E})\bigg|_{E = E_{0}}.
\end{equation}
This will be the fundamental formula for proving invariance under the hamiltonian flow.

As a last preliminary before proving invariance, we have the following result.

\begin{lemma}\label{lemaAbreInt}
Let $A\subseteq H^{-1}(E_0)$ be a relative open set with $A=\mathcal{A} \cap H^{-1}(E_0)$, where $\mathcal{A}$ is open in $\mathbb{R}^{2n}$. For every $t\in(-\varepsilon,\varepsilon)$ the equation
\begin{equation*}
    \Phi_t(A)=\Phi_t(\mathcal{A})\cap H^{-1}(E_0)
\end{equation*}
is satisfied.
\end{lemma}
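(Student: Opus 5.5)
The plan is to prove the set equality $\Phi_t(A)=\Phi_t(\mathcal{A})\cap H^{-1}(E_0)$ by double inclusion. Only two facts are needed: conservation of energy \eqref{eq:cons.energia}, applied for both $t$ and $-t$, and the fact that for $t\in(-\varepsilon,\varepsilon)$ the map $\Phi_t$ is a bijection (indeed a diffeomorphism) with inverse $\Phi_{-t}$. The latter follows from the group property of flows $\Phi_{-t}\circ\Phi_t=\mathrm{id}$, valid on the symmetric interval where the flow is defined.

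\emph{First inclusion.} Since $A=\mathcal{A}\cap H^{-1}(E_0)$, we have $A\subseteq\mathcal{A}$ and hence $\Phi_t(A)\subseteq\Phi_t(\mathcal{A})$. We also have $A\subseteq H^{-1}(E_0)$, so \eqref{eq:cons.energia} gives $\Phi_t(A)\subseteq\Phi_t(H^{-1}(E_0))\subseteq H^{-1}(E_0)$. Together these give $\Phi_t(A)\subseteq\Phi_t(\mathcal{A})\cap H^{-1}(E_0)$.

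\emph{Reverse inclusion.} Take $y\in\Phi_t(\mathcal{A})\cap H^{-1}(E_0)$ and write $y=\Phi_t(x)$ with $x\in\mathcal{A}$. Then $x=\Phi_{-t}(y)$, and since $-t\in(-\varepsilon,\varepsilon)$, conservation of energy applied to $\Phi_{-t}$ gives $x\in H^{-1}(E_0)$. Hence $x\in\mathcal{A}\cap H^{-1}(E_0)=A$, and so $y\in\Phi_t(A)$. Equivalently, $H(\Phi_s(x))$ is constant in $s$ because $\frac{d}{ds}H(\Phi_s(x))=\{H,H\}=0$, so $H(x)=H(y)=E_0$.

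\emph{Main point of care.} The only subtle step is this reverse inclusion. There one needs $\Phi_{-t}$ to be defined at $y$ and to invert $\Phi_t$. Both hold precisely because $t$ lies in the symmetric interval $(-\varepsilon,\varepsilon)$, which is where the restriction on $t$ in the statement is used. One should also note that $H(\Phi_t(x))=H(x)$ holds pointwise for every $x$ at which the flow is defined, and not merely as the set inclusion in \eqref{eq:cons.energia}; the reverse inclusion relies on this pointwise version.
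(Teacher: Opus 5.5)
Your proof is correct and follows the paper's argument. The paper proves the same double inclusion: the forward direction from $\Phi_t(\mathcal{A}\cap H^{-1}(E_0))\subseteq\Phi_t(\mathcal{A})\cap\Phi_t(H^{-1}(E_0))$ together with conservation of energy, and the reverse direction by taking a preimage in $\mathcal{A}$ and using exactly your ``equivalently'' remark, $H(x)=H(\Phi_t(x))=E_0$; so the detour through $\Phi_{-t}$ and the symmetry of $(-\varepsilon,\varepsilon)$ is not actually needed, since pointwise conservation of energy already suffices.
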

\begin{proof}
Let $t\in (-\varepsilon,\varepsilon)$. We have that
\begin{align*}
    \Phi_t(A)&=\Phi_t(\mathcal{A} \cap H^{-1}(E_0))\\
    &\subseteq\Phi_t(\mathcal{A})\cap \Phi_t(H^{-1}(E_0))\\
    &\subseteq \Phi_t(\mathcal{A})\cap H^{-1}(E_0),
\end{align*}
where the last inclusion follows from the conservation of energy \eqref{eq:cons.energia}.

For the remaining inclusion consider $x\in \Phi_t(\mathcal{A})\cap H^{-1}(E_{0})$. Thus there exists $y\in \mathcal{A}$ such that $x = \Phi_{t}(y)$. Since $A=\mathcal{A} \cap H^{-1}(E_{0})$ it is enough to show that $y\in H^{-1}(E_{0})$ to conclude that $y\in A$. Conservation of energy (\ref{eq:cons.energia}) implies that
\begin{align*}
    H(y) &= H(\Phi_{t}(y))\\
    &= H(x)\\
    &= E_{0},
\end{align*}
where the last equality follows from $x\in H^{-1}(E_{0})$. it follows that $y\in H^{-1}(E_{0})$ and thus $y\in A$. We conclude that $x = \Phi_{t}(y) \in \Phi_{t}(A)$ and obtain the remaining inclusion.
\end{proof}

We can now prove flow invariance of open sets for short times.

\begin{theorem}[Invariance of Open Sets]\label{teo.inv.abiertos}
Let $A\subseteq H^{-1}(E_0)$ be a relative open set, with $A=\mathcal{A} \cap H^{-1}(E_0)$ where $\mathcal{A}$ is open in $\mathbb{R}^{2n}$ is open. For every $t\in(-\varepsilon,\varepsilon)$ we have that
\begin{equation*}
    \omega_{E_0}(\Phi_t(A))=\omega_{E_0}(A).
\end{equation*}
Hence, $\omega_{E_{0}}(A)$ is invariant under the hamiltonian flow.
\end{theorem}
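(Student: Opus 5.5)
The plan is to combine the derivative formula \eqref{MicroDerivada}, Lemma \ref{lemaAbreInt} and Liouville's Theorem \ref{TeoLiouville}. Fix $t\in(-\varepsilon,\varepsilon)$. Since $\Phi_t$ is a diffeomorphism, $\Phi_t(\mathcal{A})$ is open in $\mathbb{R}^{2n}$. By Lemma \ref{lemaAbreInt},
\begin{equation*}
    \Phi_t(A)=\Phi_t(\mathcal{A})\cap H^{-1}(E_0),
\end{equation*}
so $\Phi_t(A)$ is again a relatively open subset of $H^{-1}(E_0)$, cut out by the open set $\Phi_t(\mathcal{A})$. Hence \eqref{MicroDerivada} applies to it and gives
\begin{equation*}
    \omega_{E_0}(\Phi_t(A))=\frac{d}{dE}\lambda\big(\Phi_t(\mathcal{A})\cap H_E\big)\bigg|_{E=E_0}.
\end{equation*}
It therefore suffices to show that $\lambda(\Phi_t(\mathcal{A})\cap H_E)=\lambda(\mathcal{A}\cap H_E)$ for every $E$ in a neighbourhood of $E_0$. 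Differentiating at $E_0$ and using \eqref{MicroDerivada} once more for $A$ then gives $\omega_{E_0}(\Phi_t(A))=\omega_{E_0}(A)$.

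The key identity is $\Phi_t(\mathcal{A})\cap H_E=\Phi_t(\mathcal{A}\cap H_E)$. I would prove it exactly as in Lemma \ref{lemaAbreInt}, with $H^{-1}(E_0)$ replaced by the subenergy set $H_E=H^{-1}([0,E])$, using conservation of energy $H\circ\Phi_t=H$. For ``$\subseteq$'', write $x=\Phi_t(y)$ with $y\in\mathcal{A}$; then $H(y)=H(x)\in[0,E]$, so $y\in\mathcal{A}\cap H_E$. For ``$\supseteq$'', if $y\in\mathcal{A}\cap H_E$ then $H(\Phi_t(y))=H(y)\in[0,E]$, so $\Phi_t(y)\in\Phi_t(\mathcal{A})\cap H_E$. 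The set $\mathcal{A}\cap H_E$ is Borel, and its image under the diffeomorphism $\Phi_t$ is also Borel. Liouville's Theorem \ref{TeoLiouville} then yields $\lambda(\Phi_t(\mathcal{A}\cap H_E))=\lambda(\mathcal{A}\cap H_E)$. So the two functions of $E$ coincide, and so do their derivatives at $E_0$.

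The main obstacle is making sure \eqref{MicroDerivada} is legitimately applicable to the open set $\Phi_t(\mathcal{A})$. This requires the function $E\mapsto\lambda(\Phi_t(\mathcal{A})\cap H_E)$ to be differentiable at $E_0$, with derivative given by the coarea integral. This follows from the Coarea Formula \ref{CoareaFormula} with $f=\mathbbm{1}_{\Phi_t(\mathcal{A})}$, together with continuity of $E\mapsto\int_{H^{-1}(E)\cap\Phi_t(\mathcal{A})}|\nabla H|^{-1}d\mu_E$ near the regular value $E_0$. This is the same justification the paper already uses to derive \eqref{MicroDerivada} for an arbitrary open set. Because the two functions of $E$ are identical, differentiability of one is equivalent to differentiability of the other, so no additional regularity is needed beyond what \eqref{MicroDerivada} already assumes for $\mathcal{A}$.

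A second subtlety is that $\Phi_t$ must be defined on all of $\mathcal{A}$. For $|t|<\varepsilon$ this holds by the choice of $\varepsilon$ as a symmetric neighbourhood on which the flow is a global family of diffeomorphisms. Since $t$ was arbitrary in $(-\varepsilon,\varepsilon)$, this concludes the invariance of $\omega_{E_0}(A)$.
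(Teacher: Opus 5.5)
Your argument is essentially the paper's proof: Lemma \ref{lemaAbreInt} shows $\Phi_t(A)$ is cut out by the open set $\Phi_t(\mathcal{A})$, then come the identity $\Phi_t(\mathcal{A})\cap H_E=\Phi_t(\mathcal{A}\cap H_E)$, Liouville's Theorem, and differentiation via \eqref{MicroDerivada}; the only difference is that you get the identity directly from $H\circ\Phi_t=H$ (as in Lemma \ref{lemaAbreInt}), whereas the paper first shows $\Phi_t(H_E)=H_E$ and then uses injectivity of $\Phi_t$. One caution on your ``main obstacle'' paragraph, which applies equally to the paper: continuity of $E\mapsto\int_{H^{-1}(E)\cap\mathcal{O}}|\nabla H|^{-1}\,d\mu_E$ at a regular value fails for some open $\mathcal{O}$ (for $\mathcal{O}=\mathbb{R}^{2n}\setminus H^{-1}(E_0)$ the right-hand side of \eqref{MicroDerivada} is $\Omega(E_0)$ while $\omega_{E_0}(\emptyset)=0$), and the coincidence of the two functions of $E$ does not make \eqref{MicroDerivada} for $\Phi_t(\mathcal{A})$ free, because once it holds for $\mathcal{A}$ and your identity holds, \eqref{MicroDerivada} for $\Phi_t(\mathcal{A})$ is equivalent to the very equality being proved.
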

\begin{proof}
Let $t\in(-\varepsilon,\varepsilon)$. The fact that $\Phi_t$ is a diffeomorphism with inverse $\Phi_{-t} $ implies that for every $x\in H^{-1}(E_0)$, we have that
\begin{equation*}
    x=\Phi_t(\Phi_{-t}(x)).
\end{equation*}
Conservation of energy \eqref{eq:cons.energia} implies that 
\begin{align*}
    H(\Phi_{-t}(x))&=H(x)\\
    &=E_0.
\end{align*}
Thus $x\in \Phi_t(H^{-1}(E_0))$, hence $$ H^{-1}(E_0)\subseteq \Phi_t(H^{-1}(E_0)).$$
This last inclusion, along with the conservation of energy (\ref{eq:cons.energia}), implies that
\begin{equation*}
    \Phi_t(H^{-1}(E_0))=H^{-1}(E_0).
\end{equation*}
Using the the previous equality and the fact that
\begin{equation*}
    H_{E} = \bigcup_{e\in [0,E]} H^{-1}(e)
\end{equation*}
we find that
\begin{equation*}
    \Phi_t(H_E)= H_E.
\end{equation*}
Lemma \ref{lemaAbreInt} and the fact that $\Phi_t$ is a diffeomorphism imply that  $\Phi_t(A)$ is a reletive open in $H^{-1}(E_0)$. It follows from equation \eqref{MicroDerivada} and Liouville's Theorem \ref{TeoLiouville} applied to the set $\mathcal{A}\cap H_E$ that
\begin{align*}
    \omega_{E_0}(\Phi_t(A))&=\int_{\Phi_t(A)}  \frac{d\mu_{E}}{|\nabla H|}\\
    &= \frac{d}{dE} \lambda(\Phi_t(\mathcal{A})\cap H_E)\big|_{E=E_0}\\
    &= \frac{d}{dE} \lambda(\Phi_t(\mathcal{A})\cap\Phi_t(H_E))\big|_{E=E_0}\\
    &= \frac{d}{dE} \lambda(\Phi_t(\mathcal{A}\cap H_E))\big|_{E=E_0}\\
    &=\frac{d}{dE} \lambda(\mathcal{A}\cap H_E)\big|_{E=E_0}\\
    &=\omega_{E_0}(A),
\end{align*}
which is the required equality.
\end{proof}

The full invariance of $\omega_{E_{0}}$ now follows from a simple consequence of Dynkin's Theorem; if $\mu$ and $\nu$ are two Borel  $\sigma$-finite measures in a topological space that agree in open sets then $\mu = \nu$.

\begin{theorem}[Invariance of Microcanonical Measure for Short Times]\label{InvarianceShortTime}
The microcanonical measure is invariant under the hamiltonian flow for $t\in (-\epsilon,\epsilon)$.
\end{theorem}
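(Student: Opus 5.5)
Fix $t\in(-\epsilon,\epsilon)$ and compare two Borel measures on the Borel $\sigma$-algebra of $H^{-1}(E_0)$: the microcanonical measure $\omega_{E_0}$ and its transport under the flow,
\begin{equation*}
    \nu_t(B) = \omega_{E_0}(\Phi_t(B)), \qquad B\in Bo(H^{-1}(E_0)).
\end{equation*}
The proof of Theorem \ref{teo.inv.abiertos} shows that $\Phi_t$ restricts to a bijection of $H^{-1}(E_0)$ onto itself with inverse $\Phi_{-t}$. This restriction is continuous in the relative topology, and so is its inverse. Hence $\Phi_t(B)$ is the preimage of $B$ under the continuous map $\Phi_{-t}$, so it is again Borel in $H^{-1}(E_0)$. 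Since $\Phi_t$ is a bijection, it preserves countable disjoint unions, so $\nu_t$ is a well-defined Borel measure.

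The key step is to check that $\nu_t$ and $\omega_{E_0}$ agree on open sets. Every relatively open $A\subseteq H^{-1}(E_0)$ has the form $A=\mathcal{A}\cap H^{-1}(E_0)$ with $\mathcal{A}$ open in $\mathbb{R}^{2n}$, by the definition of the subspace topology. Theorem \ref{teo.inv.abiertos} then gives directly
\begin{equation*}
    \nu_t(A)=\omega_{E_0}(\Phi_t(A))=\omega_{E_0}(A).
\end{equation*}

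Next I would verify $\sigma$-finiteness so that the consequence of Dynkin's Theorem quoted before the statement applies. By standing assumption $\omega_{E_0}$ is finite: $\omega_{E_0}(H^{-1}(E_0))=\Omega(E_0)<\infty$, or $1$ after normalization. Applying the previous step to the open set $A=H^{-1}(E_0)$ (take $\mathcal{A}=\mathbb{R}^{2n}$) gives $\nu_t(H^{-1}(E_0))=\Omega(E_0)$, so $\nu_t$ is finite as well. Both measures are therefore finite and agree on open sets. Open sets form a $\pi$-system that generates the Borel sets and contains the whole space, so the uniqueness of measures extending from a $\pi$-system gives $\nu_t=\omega_{E_0}$. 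That is, $\omega_{E_0}(\Phi_t(B))=\omega_{E_0}(B)$ for every Borel $B$, and since $t$ was arbitrary in $(-\epsilon,\epsilon)$ the theorem follows.

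The main obstacle is bookkeeping rather than analysis: showing that $\nu_t$ is a genuine measure on the Borel sets of $H^{-1}(E_0)$. This needs $\Phi_t$ to map $H^{-1}(E_0)$ onto itself homeomorphically, which was established inside the proof of Theorem \ref{teo.inv.abiertos}. The real analytic content, the differentiation of the Coarea Formula combined with Liouville's Theorem, is already contained in Theorem \ref{teo.inv.abiertos}.
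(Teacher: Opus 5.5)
Your proposal is correct and follows the paper's proof. Your $\nu_t$ is exactly the paper's pushforward $(\Phi_{-t})_{*}\omega_{E_0}$; agreement on open sets comes from Theorem \ref{teo.inv.abiertos}, and uniqueness from the Dynkin argument, where your explicit check that both measures are finite (using $\nu_t(H^{-1}(E_0))=\Omega(E_0)$) makes rigorous the step the paper covers only by appeal to a ``$\sigma$-finite space''.
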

\begin{proof}
Let $t\in (-\varepsilon,\varepsilon)$ and $ A\subseteq H^{-1}(E_0)$ be a Borel measurable set. Since $\Phi_{-t} \restriction_{H^{-1}(E_0)}$ is continuous, the pushforward measure $(\Phi_{-t})_{*}\omega_{E_0}$ given by
\begin{align*}
    (\Phi_{-t})_{*}\omega_{E_0}(A)&=\omega_{E_0}((\Phi_{-t})^{-1}(A))\\
    &=\omega_{E_0}(\Phi_{t}(A))
\end{align*}
is a Borel measure. Invariance of open sets \ref{teo.inv.abiertos}, establishes that the measures $(\Phi_{-t})_{*}\omega_{E_0}$ and $\omega_{E_0}$ agree on open sets of $H^{-1}(E_0)$. Since both measures are Borel measures defined in a $\sigma$-finite space, the fact that they agree in open sets implies that they actually are the same measure, thus
\begin{equation*}
    \omega_{E_0}(\Phi _t(A))=\omega_{E_0}(A).
\end{equation*}
\end{proof}

\subsection{Invariance for long Times}

For long times the hamiltonian flow may no longer be a diffeomorphism, hence $\Phi_{t}(A)$ may fail to be a Borel set, even if $A$ is a Borel set. Hence, the notion of flow invariance is not entirely natural within the framework of Measure Theory. This leads to the concept of flow preservation.

\begin{definition}
Let $\mu$ be a measure in $H^{-1}(E_{0})$. We say that the $\mu$ \textbf{preserves the hamiltonian flow} if
\begin{equation*}
    \mu(A) = \mu((\Phi_{t})^{-1}(A))
\end{equation*}
for every Borel set $A$ and every time $t$ in the domain of $\Phi$. More simply,
\begin{equation*}
    (\Phi_{t})_{\ast}\mu = \mu
\end{equation*}
for every $t$ in the domain of $\Phi$.
\end{definition}

Note that if every $\Phi_{t}$ is a diffeomorphism, then flow preservation and flow invariance are equivalent. In this subsection we first show that the microcanonical measure is flow preserving. Then we show that if further regularity properties are imposed on the hamiltonian then the microcanonical measure is also flow invariant.

Flow preservation follows through similar techniques of those of the previous section.

\begin{theorem}[Flow Preservation for long Times]
The microcanonical measure preserves the hamiltonian flow for arbitrarily long times.
\end{theorem}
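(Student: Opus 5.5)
The plan is to reduce long times to short times using the group property of the flow, $\Phi_{t+s} = \Phi_{t}\circ\Phi_{s}$ wherever both sides are defined. Since the pushforward satisfies $(\Phi_{t}\circ\Phi_{s})_{\ast}\mu = (\Phi_{t})_{\ast}(\Phi_{s})_{\ast}\mu$, invariance for short times iterates to invariance for long times. Preimages of Borel sets under continuous maps are always Borel, so none of the measurability issues that arise for images $\Phi_{t}(A)$ appear here. This is precisely why flow preservation is the natural notion for long times.

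\textbf{Step 1: short times.} Fix $t\in(-\epsilon,\epsilon)$. For a Borel set $A\subseteq H^{-1}(E_0)$, $\Phi_{t}$ is a diffeomorphism with inverse $\Phi_{-t}$, so $(\Phi_{t})^{-1}(A) = \Phi_{-t}(A)$. Theorem~\ref{InvarianceShortTime}, applied at time $-t\in(-\epsilon,\epsilon)$, then gives $\omega_{E_0}((\Phi_t)^{-1}(A)) = \omega_{E_0}(\Phi_{-t}(A)) = \omega_{E_0}(A)$. Hence $(\Phi_{t})_{\ast}\omega_{E_0} = \omega_{E_0}$ for all short times.

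\textbf{Step 2: iteration.} Let $t$ be in the domain of $\Phi$, say $t>0$ (the case $t<0$ is symmetric). Choose $N$ with $t/N<\epsilon$. Then $\Phi_{t} = (\Phi_{t/N})^{N}$ on the set of points where the flow exists up to time $t$. Consequently $(\Phi_{t})^{-1}(A) = (\Phi_{t/N})^{-N}(A)$, and applying Step 1 $N$ times gives $\omega_{E_0}((\Phi_t)^{-1}(A)) = \omega_{E_0}(A)$.

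\textbf{Main obstacle: flows that are not complete.} When the flow is not complete, $\Phi_{t}$ is defined only on an open set $D_{t}\subseteq H^{-1}(E_0)$, and the composition identity holds only on $D_{t}$. I will interpret $(\Phi_t)^{-1}(A)$ as a subset of $D_t$ and check the intermediate domains carefully: $D_{t}\subseteq D_{t/N}$, and $\Phi_{t/N}$ maps $D_{t}$ into $D_{(N-1)t/N}$. The short-time result must then be applied to the restricted maps.

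The cleanest route I would try first is to note that, for the maximal $\epsilon$ of the paper, the short-time maps are globally defined diffeomorphisms on the level set. In that case the finite-time-blowup issue disappears for each step, and the domain of $\Phi$ is then closed under these compositions.

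Failing that, I would redo the proof of Theorem~\ref{teo.inv.abiertos} directly with preimages. By Liouville's Theorem~\ref{TeoLiouville}, $\lambda((\Phi_t)^{-1}(\mathcal{A})\cap H_E) = \lambda(\mathcal{A}\cap H_E)$, since $\Phi_t$ is a volume-preserving diffeomorphism of $D_t$ onto its image. Differentiating via \eqref{MicroDerivada} gives the equality for open sets, and the Dynkin uniqueness argument then extends it to all Borel sets.
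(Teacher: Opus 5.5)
Your Steps 1--2 take a genuinely different route from the paper. The paper never uses Theorem~\ref{InvarianceShortTime} or the group law. For an open $A=\mathcal{A}\cap H^{-1}(E_0)$ it argues as follows: continuity of $\Phi_t$ makes $(\Phi_t)^{-1}(\mathcal{A})$ open; conservation of energy gives $(\Phi_t)^{-1}(A)=(\Phi_t)^{-1}(\mathcal{A})\cap H^{-1}(E_0)$ and $(\Phi_t)^{-1}(\mathcal{A})\cap H_E=(\Phi_t)^{-1}(\mathcal{A}\cap H_E)$; then it applies \eqref{MicroDerivada} plus Liouville to the preimage, and finishes with the Dynkin argument. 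That is exactly your fallback. Your iteration is valid under the standing assumption of the short-time section: a uniform $\epsilon>0$ for which every $\Phi_s$, $|s|<\epsilon$, is a globally defined diffeomorphism. Notice what this assumption buys, though. By the uniform time lemma the flow is then complete, and every $\Phi_t$ is a diffeomorphism with inverse $\Phi_{-t}$. Your argument therefore gives full invariance $\omega_{E_0}(\Phi_t(A))=\omega_{E_0}(A)$ for all $t$ and all Borel $A$, which is stronger than the statement and renders Theorem~\ref{InvarianceLongTimes} immediate. The paper's direct route is meant to rely only on continuity of $\Phi_t$, not on the diffeomorphism property. However, its Liouville step tacitly needs the same completeness, as the next paragraph shows.

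The genuine problem is your plan for incomplete flows. No bookkeeping can rescue it, because the statement is false there. The failing step is ``$\lambda((\Phi_t)^{-1}(\mathcal{A})\cap H_E)=\lambda(\mathcal{A}\cap H_E)$ since $\Phi_t$ is a volume-preserving diffeomorphism of $D_t$ onto its image.'' Since $(\Phi_t)^{-1}(B)=\Phi_{-t}(B\cap\Phi_t(D_t))$, Liouville yields only $\lambda(\mathcal{A}\cap H_E\cap\Phi_t(D_t))$. The restricted short-time maps likewise give only $\omega_{E_0}((\Phi_s)^{-1}(B))=\omega_{E_0}(B\cap\Phi_s(D_s))$. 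Besides, without completeness there is no uniform $\epsilon>0$, so Step 1 is unavailable.

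Here is a concrete counterexample. Take $n=1$, $H=\tfrac12(1+q^4)p^2$ (smooth, minimum $0$), and $E_0>0$, a regular value. Then $H^{-1}(E_0)$ is the union of the two orbits $p=\pm\sqrt{2E_0/(1+q^4)}$, along which $\dot q=\pm\sqrt{2E_0(1+q^4)}$. Each orbit sweeps all of $q\in\mathbb{R}$ in the finite time $T=\int_{\mathbb{R}}dq/\sqrt{2E_0(1+q^4)}$. Since $|X_H|=|\nabla H|$, the density $d\mu_{E_0}/|\nabla H|$ is just $ds$ in the time parameter, so $\Omega(E_0)=2T<\infty$. For $0<t<T$, let $A$ be the relatively open set of points of the upper orbit whose maximal backward existence time is less than $t$. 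Then $(\Phi_t)^{-1}(A)=\emptyset$, while $\omega_{E_0}(A)=t/(2T)>0$.

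What is true in general is $(\Phi_t)_\ast(\omega_{E_0}\restriction_{D_t})=\omega_{E_0}\restriction_{\Phi_t(D_t)}$. To obtain the theorem as stated you must add completeness of the flow as a hypothesis; it holds, e.g., for compact level sets, as the paper remarks. The paper's own proof relies on the same tacit assumption in its step $\lambda((\Phi_t)^{-1}(\mathcal{A}\cap H_E))=\lambda(\mathcal{A}\cap H_E)$.
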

\begin{proof}
The strategy is to prove flow preservation for open sets and conclude flow preservation through the same Dynkin system argument as before. Let $A$ be a relative open set of $H^{-1}(E_{0})$, thus there exists $\mathcal{A}$ open in $\mathbb{R}^{2n}$ such that $A = \mathcal{A} \cap H^{-1}(E_{0})$. For $u\in [0,E]$ with $E> E_0$, we define
\begin{equation*}
    A_{u} = \mathcal{A}\cap H^{-1}(u),
\end{equation*}
hence $A_{E_{0}} = A$. We have that
\begin{align*}
    (\Phi_{t})^{-1}(A_{u}) &= (\Phi_{t})^{-1}(\mathcal{A} \cap H^{-1}(u))\\
    &= (\Phi_{t})^{-1}(\mathcal{A}) \cap (\Phi_{t})^{-1}(H^{-1}(u))\\
    &\subseteq (\Phi_{t})^{-1}(\mathcal{A}) \cap H^{-1}(u),
\end{align*}
where the last inclusion follows from conservation of energy \eqref{eq:cons.energia}. We now show that both of the last sets are actually equal. Given $x \in (\Phi_{t})^{-1}(\mathcal{A}) \cap H^{-1}(u)$ we have that $\Phi_{t}(x) \in \mathcal{A}$ and $H(x) = u$. We need to show that $x\in (\Phi_{t})^{-1}(H^{-1}(u))$, which is equivalent to $H(\Phi_{t}(x)) = u$, which follows from the fact that $H(x) = u$ and conservation of energy. Thus we conclude that
\begin{align}\label{EqLargeu}
    (\Phi_{t})^{-1}(A_{u}) = (\Phi_{t})^{-1}(\mathcal{A}) \cap H^{-1}(u),
\end{align}
and, in particular,
\begin{equation}\label{eqLargeTime1}
    (\Phi_{t})^{-1}(A) = (\Phi_{t})^{-1}(\mathcal{A}) \cap H^{-1}(E_{0}).
\end{equation}
Taking the union over $u\in [0,E]$ in \eqref{EqLargeu}, it also follows that
\begin{equation*}
    (\Phi_{t})^{-1}(\mathcal{A})\cap H_{E} = (\Phi_{t})^{-1}(\mathcal{A} \cap H_{E})
\end{equation*}

Since $\Phi_{t}$ is continuous, the set $(\Phi_{t})^{-1}(\mathcal{A})$ is open, hence, the equality in \eqref{eqLargeTime1} is a valid decomposition for \eqref{MicroDerivada}. It follows that
\begin{align*}
    \omega_{E_{0}}((\Phi_{t})^{-1}(A)) &= \frac{d}{dE}\lambda((\Phi_{t})^{-1}(\mathcal{A}) \cap H_{E})\bigg|_{E = E_{0}}\\
    &= \frac{d}{dE}\lambda((\Phi_{t})^{-1}(\mathcal{A} \cap H_{E}))\bigg|_{E = E_{0}}\\
    &= \frac{d}{dE}\lambda(\mathcal{A} \cap H_{E})\bigg|_{E = E_{0}}\\
    &= \omega_{E_{0}}(A).
\end{align*}
It follows that $\omega_{E_{0}}$ preserves the hamiltonian flow on open sets. Using the same Dynkin system argument as in Theorem \ref{InvarianceShortTime}, we conclude that $\omega_{E_{0}}$ is flow preserving.
\end{proof}

We now show that if $E_{0}$ is an interior point of the set of regular values of $H$ then $\omega_{E_{0}}$ is flow invariant. The main obstacle for this purpose is that the differentiation of the Coarea Formula can only be done for open sets and not for arbitrary Borel sets. Thus, we work directly with the Coarea Formula without differentiating it. Consider $\mathcal{A}$ a Borel set of $\mathbb{R}^{2n}$ such that $\mathcal{A}\subseteq H_{E_0}$ and define
\begin{equation*}
    A_{E} = \mathcal{A}\cap H^{-1}(E).
\end{equation*}
Thus $A_E$ is a Borel subset of $H^{-1}(E)$. Selecting $f = \chi_{\mathcal{A}}$ in the Coarea Formula yields
\begin{equation*}
    \lambda(\mathcal{A}) = \int_{0}^{E_{0}}\omega_{E}(A_{E})\;dE.
\end{equation*}
Liouville's Theorem \ref{TeoLiouville}  then implies that
\begin{align*}
    \int_{0}^{E_{0}}\omega_{E}(A_{E})\;dE &= \lambda(\mathcal{A})\\
    &= \lambda(\Phi_{t}(\mathcal{A}))\\
    &= \int_{0}^{E_{0}}\omega_{E}(\Phi_{t}(A_{E}))\;dE.
\end{align*}
Hence,
\begin{equation}\label{EqLargeTime}
    \omega_{E}(A_{E}) = \omega_{E}(\Phi_{t}(A_{E})) \quad\text{for almost every}\quad E\in [0,E_0].
\end{equation}
This equality almost everywhere is not enough for our purposes, since the exceptional null set may be the one that carries the relevant information. For example, is $A$ is a Borel subset of $H^{-1}(E_{0})$ we can select $\mathcal{A} = A$, then
\begin{equation*}
    A_{E} = \begin{cases}
        \emptyset &\text{if} \quad E\neq E_{0}\\
        A &\text{if} \quad E= E_{0}.
    \end{cases}
\end{equation*}
The function $E\longmapsto \omega_{E}(A_{E})$ would then be discontinuous in $E_{0}$, but the discontinuity carries precisely the information we desire, that is, $\omega_{E_{0}}(A)$. This problem would be solved if both $E\longmapsto \omega_{E}(A_{E})$ and $E\longmapsto \omega_{E}(\Phi_{t}(A_{E}))$ are continuous functions, since in this case the equality would be for all $E$.

The strategy is then to select an arbitrary $A$ Borel set of $H^{-1}(E_{0})$ and to \textquote{enlarge} it into a set $\mathcal{A}$ such that $A = A_{E_{0}} = \mathcal{A}\cap H^{-1}(E_{0})$ and $E\longmapsto \omega_{E}(A_{E})$ is continuous. In the case $A$ is open the definition of relative open sets immediately yields $\mathcal{A}$ as an open set of the ambient space, futhermore, the Coarea Formula yields the continuity of $E\longmapsto \omega_{E}(A_{E})$. In the general case, the enlarging of $A$ must be done more carefully. As a last complication, the enlarging of $A$ must be such that not only $E\longmapsto \omega_{E}(A_{E})$ is continuous, but $E\longmapsto \omega_{E}(\Phi_{t}(A_{E}))$ must be continuous too.

\begin{theorem}[Invariance for Long Times]\label{InvarianceLongTimes}
Let $H$ be a $C^{2}$ hamiltonian, $I$ a non-empty open interval of regular values of $H$ and $E_{0}\in I$. If $A$ is a Borel set of $H^{-1}(E_{0})$ such that $\Phi_{t}(A)$ is also a Borel set of $H^{-1}(E_{0})$ then
\begin{equation*}
    \omega_{E_{0}}(A) = \omega_{E_0}(\Phi_{t}(A))
\end{equation*}
for every $t$ in the domain of $\Phi$. Briefly, if $E_{0}$ is an interior point of the set of regular values of $H$ the $\omega_{E_{0}}$ is invariant under the hamiltonian flow for every time.
\end{theorem}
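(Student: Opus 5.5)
The plan is to carry out the ``enlarging'' strategy described just before the statement, using a product structure of a neighbourhood of $H^{-1}(E_0)$ given by the gradient flow. Since $I$ is an open interval of regular values and $H$ is $C^{2}$, the normalized vector field
\begin{equation*}
    X = \frac{\nabla H}{|\nabla H|^{2}}
\end{equation*}
is $C^{1}$ on $H^{-1}(I)$ and satisfies $X(H) = 1$. Its flow $\psi_{s}$ therefore maps $H^{-1}(E_{0})$ onto $H^{-1}(E_{0}+s)$, at least locally, for $|s|<\delta$. Given a Borel set $A\subseteq H^{-1}(E_{0})$, I would define the enlargement
\begin{equation*}
    \mathcal{A} = \bigcup_{|s|<\delta}\psi_{s}(A),
\end{equation*}
so that $A_{E} = \psi_{E-E_{0}}(A)$. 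Then $\mathcal{A}$ is Borel, being the image of $A\times(-\delta,\delta)$ under a $C^{1}$ diffeomorphism onto its image, and $A_{E_{0}} = A$. This is the same kind of set the Coarea argument leading to \eqref{EqLargeTime} uses, with $[0,E_{0}]$ replaced by $(E_{0}-\delta,E_{0}+\delta)$.

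The first key step is continuity of $E\mapsto\omega_{E}(A_{E})$. Pulling back by $\psi_{E-E_{0}}$ gives
\begin{equation*}
    \omega_{E}(A_{E}) = \int_{A} J_{E}\,\frac{d\mu_{E_{0}}}{|\nabla H|},
\end{equation*}
where $J_{E}$ is the ratio of the densities $\psi_{E-E_{0}}^{*}(d\mu_{E}/|\nabla H|)$ and $d\mu_{E_{0}}/|\nabla H|$. This ratio depends continuously on $E$ and is locally bounded, because $\psi$ is $C^{1}$. Continuity then follows by dominated convergence. If the level sets are noncompact, I would first reduce to $A$ contained in a compact piece of $H^{-1}(E_{0})$, using $\sigma$-finiteness and countable additivity, so that $\psi_{s}$ is defined uniformly there and the bounds are uniform.

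The second step, and the one I expect to be the main obstacle, is continuity of $E\mapsto\omega_{E}(\Phi_{t}(A_{E}))$. I would try the same representation for the sets $\Phi_{t}(A_{E})$ viewed as slices of $\Phi_{t}(\mathcal{A})$. Since $\Phi_{t}$ is a diffeomorphism on its domain, $\Phi_{t}(\mathcal{A})$ is Borel, and $\Phi_{t}(A_{E}) = \Phi_{t}(\mathcal{A})\cap H^{-1}(E)$ by the argument of Lemma \ref{lemaAbreInt}. Because $\Phi_{t}$ commutes with $H$, the map $\Phi_{t}\circ\psi_{E-E_{0}}\circ\Phi_{-t}$ sends $\Phi_{t}(A)$ onto $\Phi_{t}(A_{E})$, and it is $C^{1}$ with Jacobian factor continuous in $E$. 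So the same dominated convergence argument applies, now over the base set $\Phi_{t}(A)$, which is Borel by hypothesis. The delicate points are whether $\Phi_{t}$ is defined on all of $\mathcal{A}$ for the given $t$, which may require shrinking $\delta$ or again restricting to compact pieces, and whether the pulled-back density is genuinely continuous in $E$. If this direct approach fails, a fallback is to deduce the result from flow preservation: on $H^{-1}(E_{0})$, where $\Phi_{t}$ is injective with inverse $\Phi_{-t}$, we have $\omega(\Phi_{t}(A)) = \omega(\Phi_{-t}^{-1}(A))$, and this equals $\omega(A)$ by the Flow Preservation theorem applied to $\Phi_{-t}$.

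To conclude, we apply \eqref{EqLargeTime} with $f=\chi_{\mathcal{A}}$ together with Liouville's Theorem \ref{TeoLiouville}. This gives $\omega_{E}(A_{E}) = \omega_{E}(\Phi_{t}(A_{E}))$ for almost every $E$ near $E_{0}$. Both sides are continuous in $E$, so the equality holds for every such $E$, and in particular at $E=E_{0}$, which is the claim.
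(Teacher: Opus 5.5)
Your main route keeps the paper's skeleton: enlarge $A$ along a flow transverse to $H^{-1}(E_0)$, compare the two slice functions via the Coarea Formula and Liouville's Theorem, pass to $E=E_0$ by continuity, and glue compact pieces by $\sigma$-additivity. Both technical ingredients differ, though.

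\emph{Choice of transverse flow.} The paper straightens $X_H$ in a chart and takes a transverse coordinate field $N$ with $[X_H,N]=0$. Your field $\nabla H/|\nabla H|^{2}$ satisfies $X(H)=1$, and this is what actually guarantees $A_E=\psi_{E-E_0}(A)$. A coordinate field need not satisfy $N(H)=1$, so the paper's identity $A_E=\varphi_E(A)$ needs a chart in which $H$ is itself a coordinate.

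\emph{Continuity of the second slice function.} The paper gets continuity of $E\mapsto\omega_E(\Phi_t(A_E))$ from $\Phi_t\circ\varphi_E=\varphi_E\circ\Phi_t$. You instead use the conjugate $\Phi_t\circ\psi_s\circ\Phi_{-t}$. This is the flow of $(\Phi_t)_{\ast}X$, which again raises $H$ at unit rate because $\Phi_t$ preserves $H$. It is defined near the compact set $\Phi_t(K)$, with $K\supseteq A$ your compact piece, wherever that set lies. This is what makes your argument genuinely valid for long times. The paper's commutation relation is only available inside a single flow box, which $\Phi_t(A)$ generally leaves for large $t$.

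\emph{One step to tighten.} You inherit this from the paper: equality of the integrals of $\omega_E(A_E)$ and $\omega_E(\Phi_t(A_E))$ over a single interval does not give \eqref{EqLargeTime} almost everywhere. Your construction repairs it for free. Apply Coarea and Liouville to $\bigcup_{s\in J}\psi_s(A)$ for $J=(-h,h)$, divide by $2h$ and let $h\to 0$. Continuity of both slice functions at $E_0$ then gives the identity at $E_0$ directly.

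\emph{The fallback.} This is a genuinely shorter route. By the fundamental theorem on flows, $\Phi_t$ is a diffeomorphism from its open domain onto the domain of $\Phi_{-t}$, with inverse $\Phi_{-t}$. Hence $\Phi_t(A)=(\Phi_{-t})^{-1}(A)$, and the statement is a direct corollary of the Flow Preservation theorem, with no enlargement at all.

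The price is that it is only as solid as that theorem. Its proof applies \eqref{MicroDerivada} to the open sets $(\Phi_t)^{-1}(\mathcal{A})$. That formula needs something like continuity of $E\mapsto\omega_E(\mathcal{A}\cap H^{-1}(E))$ at $E_0$, and it fails for badly chosen open sets. For example, with $\mathcal{A}=\{H>E_0\}$ one has $A=\emptyset$, yet $\lambda(\mathcal{A}\cap H_E)$ has right derivative $\Omega(E_0)$ whenever $\Omega$ is continuous there. Supplying that continuity for the relevant sets is exactly what your main route does. So the main route is the self-contained argument, and the fallback is best seen as a consequence of it.
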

\begin{proof}
For simplicity we assume $E_{0}$ to be $0$. We will prove invariance for adequate subsets of $A$ and then piece them together through second countability and $\sigma$-additivity. The precise conditions for the subsets of $A$ will be given along the construction.

Let $X$ be the vector field that generates the hamiltonian flow, that is, the hamiltonian vector field. Since $0$ is a regular value, we have that $X$ does not vanish in $H^{-1}(0)$, hence, the same is true for every $E\in I$, shrinking $I$ if necessary. The Straightening Theorem (see \cite{LeeDif}) implies that exist a chart of $\mathbb{R}^{2n}$ such that $\partial_{1} = X$. Since $\{\partial_{i}\}_{i=1}^{2n}$ spans $\mathbb{R}^{2n}$, at least one is a non-tangent vector, hence we may select a non-tangent vector field $N$ that commutes with $X$. We now assume $A$ to be contained in the domain of such chart (This assumption will be removed at the end of the proof).

Let $\varphi_{E}$ be the flow of $N$. Since $[X,N] = 0$ we have that $\Phi_{t}\circ \varphi_{E} = \varphi_{E}\circ \Phi_{t}$ for any $t$ and $E\in I$, once again shrinking $I$ to be small enough. If we define
\begin{equation*}
    \mathcal{A} = \bigcup_{E\in I}\varphi_{E}(A),
\end{equation*}
then
\begin{equation*}
    A_{E} = \varphi_{E}(A),
\end{equation*}
and in particular, $A_{0} = A$.

We will now show that both $E\longmapsto \omega_{E}(A_{E})$ and $E\longmapsto \omega_{E}(\Phi_{t}(A_{E}))$ are continuous. Firstly, note that the Flowout Theorem (see \cite{LeeDif}) implies that $\varphi_{E}$ is a diffeomorphism for small enough $E$, shrinking $I$ once again. Hence,
\begin{align*}
    \omega_{E}(A_{E}) &= \int_{\varphi_{E}(A)}\frac{1}{|\nabla H|}d\mu_{E}\\
    &= \int_{A}\frac{1}{|\nabla H|}d(\varphi_{E})_{\ast}\mu_{E}\\
    &= \int_{A}\frac{det\;d\varphi_{E}}{|\nabla H|}d\mu_{E},
\end{align*}
where we omit compositions to simplify notation. Since we are assuming that $A$ is contained in a chart, we have that $d\mu_{E} = \sqrt{det\;g}\;dx^{1}\ldots dx^{2n-1}$. Hence, $\omega_{E}(A_{E})$ is simply the integral of
\begin{equation*}
    \frac{1}{|\nabla H|}det\;d\varphi_{E}\;\sqrt{det\;g}.
\end{equation*}
Since $I$ is a neighbourhood of regular values, we have that $\nabla H\neq 0$ and hence $\frac{1}{|\nabla H|}$ is a continuous function. Since $\varphi$ is a diffeomorphism for small $E$, we have that $det\;\varphi_{E}\neq 0$ and $det\;g \neq 0$ because it is a riemannian metric. Thus the integrand is uniformly continuous for small enough $E$ (shrinking $I$). If we assume $A$ to be contained in a compact subset of the image of the chart (a subset of $\mathbb{R}^{2n-1}$), this is enough to conclude that $E \longmapsto \omega_{E}(A_{E})$ is continuous (This assumption will be removed at the end of the proof).

To show that the same is true for $E\longmapsto \omega_{E}(\Phi_{t}(A_{E}))$ we use the fact that $\varphi$ and $\Phi$ commute to obtain that
\begin{align*}
    \omega_{E}(\Phi_{t}(A_{E})) &= \omega_{E}(\Phi_{t}\circ\varphi_{E}(A))\\
    &= \omega_{E}(\varphi_{E}\circ\Phi_{t}(A))\\
    &= \omega_{E}(\varphi_{E}(\Phi_{t}(A))).
\end{align*}
Repeating the argument for the continuity of $E \longmapsto \omega_{E}(A_{E})$ with $\Phi_{t}(A)$ in place of $A$ we obtain the continuity of $E\longmapsto \omega_{E}(\Phi_{t}(A_{E}))$. Note that this step does not require $\Phi_{t}$ to a diffeomorphism since we do not apply a change of variables result for $\Phi$, but only for $\varphi$. Also, since $H$ is $C^{2}$, we know that $\Phi_{t}$ is a $C^{1}$ function, hence, our assumption that $A$ is contained in a compact set implies that $\Phi_{t}(A)$ is bounded and the continuity argument can indeed be repeated for $\Phi_{t}(A)$.

Our considerations previous to this theorem imply that $\omega_{0}(A) = \omega_{0}(\Phi_{t}(A))$. For the case when $A$ is not contained in the compact subset of a single chart, we cover each point of $A$ with a suitable chart and use second countability to reduce it to a countable cover. Making the different pieces of $A$ disjoint by the usual process, we apply the invariance we just proved and $\sigma$-additivity to conclude invariance for general Borel sets.
\end{proof}

The proof of Theorem \ref{InvarianceLongTimes} relies heavily on results from Differential Geometry, hence the condition of $E_{0}$ being a interior point of the set of regular values of $H$ is crucial for our proof. However, there may exist a more general proof that relies on Geometric Measure Theory to deal with the case were $E_{0}$ is a singular value. This would rely on properties of rectifiable sets instead of smooth manifolds.

\subsection{Microcanonical Partition Function}

We now study the normalization constant $\Omega(E)$ introduced in subsection \ref{SubsecMicro}. Note that $\Omega(E)$ is defined whenever $E$ is a regular value of $H$. Sard's Theorem implies that the set of critical values of $H$ is a Lebesgue null set, hence $\Omega(E)$ is defined for almost every $E$. This allows us to make the following definition.

\begin{definition}
We define the \textbf{microcanonical partition function} $\Omega$ as
\begin{equation*}
    \Omega(E) = \int_{H^{-1}(E)} \frac{d\mu_{E}}{|\nabla H|}.
\end{equation*}
\end{definition}

The computation of the microcanonical partition function is fundamental for computing the microcanonical measure. This section is devoted to different techniques for computing $\Omega$. The results of this section will also be used to show that the microcanonical measure reproduces classical Statistical Physics.

The simplest technique for computing $\Omega$ is based on the following function.

\begin{definition}
We define the \textbf{state density function} $g\colon \mathbb{R}^{+} \to \mathbb{R}$ as
\begin{equation*}
    g(E) = \lambda(H_{E}).
\end{equation*}
\end{definition}

Equation \eqref{MicroDerivada} implies that
\begin{equation*}
    \frac{dg}{dE} = \Omega.
\end{equation*}
Since $g$ is much easier to compute that $\Omega$, this is a reliable method for computing the microcanonical measure.

We now turn to the question of computing $\Omega$ for compound systems in terms of the partition function of its components. Consider two hamiltonians
\begin{equation*}
    H_{1}\colon \mathbb{R}^{2n_{1}} \to \mathbb{R}
\end{equation*}
and
\begin{equation*}
    H_{2}\colon \mathbb{R}^{2n_{2}} \to \mathbb{R},
\end{equation*}
from which we construct a third hamiltonian
\begin{equation*}
    H\colon \mathbb{R}^{2n_{1}}\times \mathbb{R}^{2n_{2}} \to \mathbb{R}
\end{equation*}
given by
\begin{equation*}
    H(q_{1},p_{1},q_{2},p_{2}) = H_{1}(q_{1},p_{1}) + H_{2}(q_{2},p_{2}),
\end{equation*}
which we may write more simply as
\begin{equation*}
    H = H_{1} + H_{2}.
\end{equation*}
Let $\Omega_{1}$, $\Omega_{2}$ and $\Omega$ be the microcanonical partition functions corresponding to $H_{1}$, $H_{2}$ and $H$, respectively. The objective is to write $\Omega$ in terms of $\Omega_{1}$ and $\Omega_{2}$. We will show that
\begin{equation}\label{EqConcolutivity}
    \Omega = \Omega_{1}\ast\Omega_{2},
\end{equation}
hence $\Omega$ is \textbf{convolutive on subsystems}. By general properties of the Laplace transform, it is enough to show that the laplace transform of $\Omega$ is multiplicative on subsystems.

\begin{definition}
We define the \textbf{canonical partition function} $\mathcal{Z}$ as the Laplace transform of $\Omega$, that is,
\begin{equation*}
    \mathcal{Z}(\beta) = \int_{0}^{\infty}\Omega(E)e^{-\beta E}\;dE.
\end{equation*}
\end{definition}

Thus, our objective is to show that
\begin{equation*}
    \mathcal{Z} = \mathcal{Z}_{1}\cdot\mathcal{Z}_{2}.
\end{equation*}
We begin with a very general proposition.

\begin{proposition}
\label{prop:integrar f(H)}
If $f\colon \mathbb{R} \to \mathbb{R}$ is such that $f\circ H$ is integrable with respect to $\lambda$, then the integral over the entire domain of H satisfies that
\begin{equation*}
    \int f\circ H \, d\lambda =\int_{0}^{\infty} f(E) \Omega(E) \, dE. 
\end{equation*} 
\end{proposition}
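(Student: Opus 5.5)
The plan is to apply the Coarea Formula (Theorem \ref{CoareaFormula}) to the function $f\circ H$. On each level set $H^{-1}(E)$ the integrand $f\circ H$ is constant and equal to $f(E)$, so it factors out of the inner integral. We work under the standing convention that $H$ is bounded below with minimum $0$, so that $\mathbb{R}^{2n} = H^{-1}([0,\infty)) = \bigcup_{b} H^{-1}([0,b])$.

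The first step is to fix $b>0$ and apply the Coarea Formula on $H^{-1}([0,b])$ with $f\circ H$ in place of $f$. This is legitimate because $f\circ H$ is Lebesgue integrable by hypothesis, and it gives
\begin{equation*}
    \int_{H_{b}} f\circ H\;d\lambda = \int_{0}^{b}\left(\int_{H^{-1}(E)}\frac{f(H)}{|\nabla H|}\;d\mu_{E}\right)dE = \int_{0}^{b} f(E)\,\Omega(E)\;dE .
\end{equation*}
The inner integral is only meaningful for regular values $E$. By Sard's theorem, as already noted in the definition of the microcanonical partition function, the critical values form a Lebesgue null set, so the $dE$-integral is unaffected by them. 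Here $\Omega(E)$ is the quantity defined above.

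The second step is to let $b\to\infty$. On the left, $f\circ H\,\chi_{H_b}\to f\circ H$ pointwise and is dominated by $|f\circ H|\in L^{1}(\lambda)$, so dominated convergence gives $\int f\circ H\,d\lambda$ in the limit. On the right we must check that $\int_0^b f\Omega\,dE$ converges to $\int_0^\infty f\Omega\,dE$ as a Lebesgue integral. To do this, apply the same identity to $|f|$, which is allowed since $|f|\circ H=|f\circ H|$ is integrable. This gives $\int_0^\infty |f(E)|\,\Omega(E)\,dE = \int |f\circ H|\,d\lambda<\infty$ by monotone convergence, so $f\Omega\in L^{1}(0,\infty)$, and dominated convergence then finishes the argument.

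The main obstacle is the bookkeeping needed to justify the Coarea Formula for $f\circ H$. The formula is stated for integrable $f$ on bounded energy windows. One could instead split $f=f^{+}-f^{-}$ and use monotone convergence on each part, which avoids any integrability issue on the right-hand side. One must also confirm that $E\mapsto f(E)\Omega(E)$ is measurable, and this follows from the Coarea Formula itself, which asserts that the inner integral is a measurable function of $E$. Finally, if $H$ is not bounded below, the same argument runs over $(-\infty,\infty)$ with $H^{-1}([a,b])$ and $a\to-\infty$, but under the paper's normalization the lower limit is $0$.
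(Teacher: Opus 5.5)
Your proof is correct and follows the same route as the paper: you apply the Coarea Formula to $f\circ H$ and pull $f(E)$ out of the inner integral, since $f\circ H$ is constant on each level set $H^{-1}(E)$. The only difference is that you explicitly justify the passage from the bounded windows $H^{-1}([0,b])$ to the whole space, via dominated convergence with the $|f|$ argument giving $f\,\Omega\in L^{1}(0,\infty)$; the paper takes this step implicitly by applying the formula directly over $[0,\infty)$.
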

\begin{proof}
For $E\geq 0$ and $x\in H^{-1}(E)$ we have $f\circ H(x) = f(E)$, thus the Coarea Formula \ref{CoareaFormula} implies that
\begin{align*}
    \int f\circ H \,d\lambda &= \int_{0}^{\infty} \int_{H^{-1}(E)} \frac{f\circ H}{|\nabla H |}\,d\mu_{E} \,dE \\
    &= \int_{0}^{\infty} \int_{H^{-1}(E)} \frac{f(E)}{|\nabla H |}\,d\mu_{E} \,dE\\
    &= \int_{0}^{\infty} f(E) \left( \int_{H^{-1}(E)} \frac{d\mu_{E}}{|\nabla H |}\right) \,dE \\
    &=\int_{0}^{\infty} f(E) \Omega(E) \, dE.
\end{align*}
\end{proof}

Hence, $\Omega$ behaves as a probability distribution for the computation of averages with respect to the Lebesgue measure. Although this proposition is interesting in its own right, it also gives an alternative formula for computing $\mathcal{Z}$.

\begin{corollary}
The canonical partition function of a hamiltonian $H$ can be computed as
\begin{align}
    \mathcal{Z}(\beta) =\int  e^{-\beta H(q,p)}\,d\lambda
\end{align}
\end{corollary}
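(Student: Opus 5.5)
This follows directly from Proposition \ref{prop:integrar f(H)} with the choice $f(E)=e^{-\beta E}$. We work with $\beta>0$ fixed, and use the standing normalization that $H$ is bounded below with minimum $0$, so that $H\geq 0$ everywhere. Then $f\circ H = e^{-\beta H}$ is a nonnegative continuous function on $\mathbb{R}^{2n}$, hence Borel measurable. It takes values in $(0,1]$.

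The plan has two steps. First, we check the hypothesis of Proposition \ref{prop:integrar f(H)}: that $e^{-\beta H}$ is $\lambda$-integrable. Second, we apply the proposition, which gives
\begin{equation*}
    \int e^{-\beta H}\,d\lambda = \int_{0}^{\infty} e^{-\beta E}\,\Omega(E)\,dE = \mathcal{Z}(\beta),
\end{equation*}
where the last equality is the definition of $\mathcal{Z}$ as the Laplace transform of $\Omega$. We also note that the set of critical values where $\Omega$ is undefined is Lebesgue-null by Sard's theorem, so it does not affect the $dE$ integral.

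The main obstacle is the integrability hypothesis, since $\lambda(\mathbb{R}^{2n})=\infty$. I would sidestep it using nonnegativity. The Coarea Formula (Theorem \ref{CoareaFormula}), applied to $f=e^{-\beta H}\chi_{H_{E}}$ on the bounded range $[0,E]$, yields the identity on each subenergy set $H_{E}$ with finite quantities. Letting $E\to\infty$, the sets $H_{E}$ increase to $\mathbb{R}^{2n}$. The monotone convergence theorem, applied on both sides, then gives the equality for nonnegative functions in $[0,\infty]$. This requires no a priori integrability.

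Consequently the corollary holds as an identity in $[0,\infty]$. Moreover, the integral $\int e^{-\beta H}\,d\lambda$ is finite exactly when the Laplace transform $\mathcal{Z}(\beta)$ converges, so the two definitions of $\mathcal{Z}$ agree whenever either one is finite.
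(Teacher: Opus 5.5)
Your proof is correct and follows the paper's route: both set $f(E)=e^{-\beta E}$ in Proposition~\ref{prop:integrar f(H)} and recognize the right-hand side as the definition of $\mathcal{Z}(\beta)$. Your monotone-convergence step usefully supplies the integrability check that the paper's proof skips, but it has one small flaw: $e^{-\beta H}\chi_{H_{E}}$ is integrable only when $\lambda(H_{E})<\infty$, which is not assumed (the paper explicitly avoids coercivity), so truncate instead with $\chi_{B_R}$ for balls $B_R$, which always gives an integrable function, and let $R\to\infty$ by monotone convergence in both the inner and outer integrals.
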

\begin{proof}
Let $f(E)=e^{-\beta E}$, then by definition
\begin{equation*}
    \mathcal{Z}=\int_{0}^{\infty} f(E)\Omega(E)\;dE.
\end{equation*}
The previous proposition implies that
\begin{align*}
    \int_{0}^{\infty} f(E)\Omega(E) \,dE &= \int f\circ H \,d\lambda\\
    &=\int e^{-\beta H}\,d\lambda,
\end{align*}
from which the result follows.
\end{proof}

The multiplicativity of $\mathcal{Z}$ now follows almost immediately.

\begin{proposition}
If $H_{1}$, $H_{2}$ are two hamiltonians and $H = H_{1} + H_{2}$, then the canonical partition functions $\mathcal{Z}_{1}$, $\mathcal{Z}_{2}$ and $\mathcal{Z}$ corresponding to $H_{1}$, $H_{2}$ and $H$ satisfy
\begin{equation*}
    \mathcal{Z} = \mathcal{Z}_{1}\cdot\mathcal{Z}_{2}.
\end{equation*}
\end{proposition}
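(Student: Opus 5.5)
The plan is to use the Corollary just proved, which expresses the canonical partition function as a phase-space integral,
\begin{equation*}
    \mathcal{Z}(\beta) = \int e^{-\beta H}\,d\lambda .
\end{equation*}
Applied to $H = H_{1}+H_{2}$ on $\mathbb{R}^{2n_{1}}\times\mathbb{R}^{2n_{2}}$, and using that the Lebesgue measure $\lambda$ of $\mathbb{R}^{2(n_{1}+n_{2})}$ is the product of the Lebesgue measures $\lambda_{1}$ and $\lambda_{2}$ of the factors, we get
\begin{equation*}
    \mathcal{Z}(\beta) = \int e^{-\beta H_{1}(q_{1},p_{1})}\,e^{-\beta H_{2}(q_{2},p_{2})}\;d(\lambda_{1}\otimes\lambda_{2}).
\end{equation*}
The integrand is a nonnegative measurable function that factors as a product of a function of $(q_{1},p_{1})$ and a function of $(q_{2},p_{2})$. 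Tonelli's theorem lets us write it as an iterated integral, and the product structure splits it into $\int e^{-\beta H_{1}}\,d\lambda_{1}\cdot\int e^{-\beta H_{2}}\,d\lambda_{2}$. Applying the Corollary once more to $H_{1}$ and to $H_{2}$ identifies these factors as $\mathcal{Z}_{1}(\beta)$ and $\mathcal{Z}_{2}(\beta)$.

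The step needing care is checking that the Corollary, through Proposition \ref{prop:integrar f(H)}, actually applies to each of $H$, $H_{1}$ and $H_{2}$. This requires the standing hypotheses. If $H_{1}$ and $H_{2}$ are $C^{2}$ with minimum $0$, then so is $H$, so the normalization with integration from $0$ to $\infty$ is consistent. The Coarea Formula also needs integrability of $e^{-\beta H}$. I would first prove the identity in $[0,\infty]$ without assuming integrability, since all the functions involved are nonnegative. For this I would use the Coarea Formula for nonnegative measurable functions via monotone convergence over the sets $H^{-1}([0,b])$ as $b\to\infty$. The identity then shows that $e^{-\beta H}$ is integrable exactly when both $e^{-\beta H_{1}}$ and $e^{-\beta H_{2}}$ are, that is, for those $\beta$ where $\mathcal{Z}_{1}$ and $\mathcal{Z}_{2}$ are finite.

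Critical values of $H$, where $\Omega$ is not defined, are a null set by Sard's Theorem. So they do not affect the integral defining $\mathcal{Z}$, and the equality holds for every $\beta$ in the common domain of convergence.
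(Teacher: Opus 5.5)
Your proof is correct and is essentially the paper's argument: rewrite each partition function as $\int e^{-\beta H}\,d\lambda$ via the Corollary and split the integral over the product space. Your appeal to Tonelli, rather than the paper's Fubini, is a mild improvement, since it gives $\mathcal{Z}=\mathcal{Z}_{1}\mathcal{Z}_{2}$ in $[0,\infty]$ without assuming integrability beforehand. One small aside: the classical Sard theorem for a map $\mathbb{R}^{2n}\to\mathbb{R}$ needs $C^{2n}$ regularity, so your appeal to it under a mere $C^{2}$ hypothesis is not justified when $n\geq 2$; the paper makes the same claim, and it only concerns the a.e.\ definition of $\Omega$, not the factorization itself.
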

\begin{proof}
It follows from Fubini's Theorem and the previous expression for the canonical partition function that
\begin{align*}
    \mathcal{Z} &=\iint e^{-\beta H_{12}(q_1,p_1,q_2,p_2)} \,d\lambda(q_1,p_1) \,d\lambda(q_2,p_2)\\
    &= \iint e^{-\beta (H_1(q_1,p_1)+H_2(q_2,p_2))} \,d\lambda(q_1,p_1) \,d\lambda(q_2,p_2)\\
    &= \int e^{-\beta H_2(q_2,p_2)} \int e^{-\beta H_1(q_1,p_1)} \, d\lambda(q_1,p_1) \,d\lambda(q_2,p_2)\\
    &= \int e^{-\beta H_1(q_1,p_1)} \, d\lambda(q_1,p_1) \int  e^{-\beta H_2(q_2,p_2)} \,d\lambda(q_2,p_2) \\
    &= \mathcal{Z}_1\cdot \mathcal{Z}_2,
\end{align*}
\end{proof}

Equation \eqref{EqConcolutivity} follows immediately from this.

\begin{corollary}
If $H_{1}$, $H_{2}$ are two hamiltonians and $H = H_{1} + H_{2}$, then the microcanonical partition functions $\Omega_{1}$, $\Omega_{2}$ and $\Omega$ corresponding to $H_{1}$, $H_{2}$ and $H$ satisfy
\begin{equation*}
    \Omega = \Omega_{1}\ast\Omega_{2}.
\end{equation*}
\end{corollary}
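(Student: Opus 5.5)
The plan is to pass through the Laplace transform, using the multiplicativity $\mathcal{Z} = \mathcal{Z}_{1}\cdot\mathcal{Z}_{2}$ just established together with the convolution theorem and the injectivity of the Laplace transform. Since $H_{1}$, $H_{2}$ and $H$ are all taken with minimum $0$ (as fixed in the construction of $H_{E}$), the partition functions $\Omega_{1}$, $\Omega_{2}$, $\Omega$ are defined almost everywhere on $[0,\infty)$ by Sard's Theorem and vanish for negative energies. The convolution is therefore the one-sided
\begin{equation*}
    (\Omega_{1}\ast\Omega_{2})(E) = \int_{0}^{E}\Omega_{1}(E-u)\,\Omega_{2}(u)\;du .
\end{equation*}

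First I will compute the Laplace transform of $\Omega_{1}\ast\Omega_{2}$ directly by Tonelli's Theorem, which applies because all functions are nonnegative. Writing $e^{-\beta E} = e^{-\beta(E-u)}e^{-\beta u}$ and substituting $v = E-u$ in the inner integral gives
\begin{equation*}
    \int_{0}^{\infty}(\Omega_{1}\ast\Omega_{2})(E)\,e^{-\beta E}\;dE = \mathcal{Z}_{1}(\beta)\,\mathcal{Z}_{2}(\beta),
\end{equation*}
for every $\beta$ at which both factors are finite. By the previous Proposition, the right-hand side equals $\mathcal{Z}(\beta)$, which is the Laplace transform of $\Omega$. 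Hence $\Omega$ and $\Omega_{1}\ast\Omega_{2}$ have the same Laplace transform on a half-line $\beta>\beta_{0}$.

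The conclusion then follows from uniqueness of the Laplace transform (Lerch's theorem). Two nonnegative locally integrable functions whose transforms agree and are finite on a half-line coincide almost everywhere, so $\Omega = \Omega_{1}\ast\Omega_{2}$ almost everywhere. An alternative that bypasses transforms is to apply Proposition \ref{prop:integrar f(H)} with $f = \chi_{[0,E]}$ and Fubini: this gives $g(E) = \int_{0}^{E} g_{1}(E-u)\,\Omega_{2}(u)\,du$, and differentiating with $g' = \Omega$ yields the same identity. I would keep this route in reserve to upgrade the result to a pointwise statement.

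The main obstacle is the hypotheses rather than the algebra. We need $\mathcal{Z}_{i}(\beta)<\infty$ for some $\beta$, i.e.\ $e^{-\beta H_{i}}$ integrable, so I will assume this explicitly, as the earlier corollary implicitly does. We also need local integrability of the $\Omega_{i}$, which follows from the Coarea Formula since $\int_{0}^{E}\Omega_{i} = \lambda((H_{i})_{E})$ is finite whenever $\mathcal{Z}_{i}$ is. Finally, the equality is only almost everywhere. To get it at every regular value I would use the continuity of $\Omega$ on intervals of regular values, which comes from the smooth dependence of the level sets, as in the proof of Theorem \ref{InvarianceLongTimes}. The convolution is continuous wherever the $\Omega_{i}$ are locally integrable, so two continuous functions that agree almost everywhere agree on those intervals.
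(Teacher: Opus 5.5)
Your argument is correct and follows the paper's route. The paper derives $\Omega=\Omega_{1}\ast\Omega_{2}$ ``immediately'' from $\mathcal{Z}=\mathcal{Z}_{1}\mathcal{Z}_{2}$ by the convolution theorem and injectivity of the Laplace transform; you have made explicit what it leaves implicit, namely the Tonelli computation, Lerch's uniqueness theorem, and the finiteness of $\mathcal{Z}_{i}$. The resulting almost-everywhere equality is all the paper's argument gives, and all that is meaningful since $\Omega$ is only defined a.e., so your pointwise upgrade is unnecessary. As sketched it would also need extra hypotheses such as compact level sets: $\Omega$ need not be continuous at regular values when level sets are non-compact (the continuity argument in Theorem~\ref{InvarianceLongTimes} covers only pieces inside a compact subset of a chart), and a convolution of two merely locally integrable functions need not be continuous.
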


\section{Transformation of Boltzmann's Principle}

In this section we show that our definition for $\Omega(E)$ reproduces classical Statistical Physics. To do this we transform Boltzmann's Principle and show that the classical expression \eqref{BoltzmannTrans} is valid as an asymptotic approximation, in agreement with classical Statistical Physics.

\subsection{Statement and Definitions}

Boltzmann's Principle states that the \textbf{entropy} $S$ of a system described by a hamiltonian $H$ is given by
\begin{equation*}
    S(E) = k_{B}\;ln\;\Omega(E),
\end{equation*}
where $k_{B}$ is Boltzmann's constant. The postulates of Thermodynamics imply that $S$ must be an invertible function and its inverse $U(S)$ is known as the \textbf{internal energy}. From the internal energy we can obtain the \textbf{Helmholtz free energy} $F$ through the Legendre transform, that is
\begin{equation*}
    F = L_{ST}(U).
\end{equation*}
Boltzmann's Principle transformed to the free energy representation states that
\begin{equation}\label{BoltzmannTrans}
    F(T) \approx -k_{B}T\;ln\;\mathcal{Z}\left(\frac{1}{k_{B}T}\right),
\end{equation}
written more simply as
\begin{equation*}
    F = -k_{B}T\;ln\;\mathcal{Z}
\end{equation*}
where $\beta = \frac{1}{k_{B}T}$. In order to show that the microcanonical measure we just developed reproduces classical statistical physics, we will show that the transformed Boltzmann Principle can be obtained through an asymptotic expansion.

We begin with the rigorous definition of the Legendre transform (see \cite{Legendre}).

\begin{definition}
Let $f:A\subseteq \mathbb{R} \to \mathbb{R}$ be a function that is either concave or convex. We define the \textbf{Legendre transform} $L_{xp}f:A^*\subseteq\mathbb{R}\to \mathbb{R}$ as
\begin{equation*}
    L_{xp}f(p) = \begin{cases}
        \sup\limits_{x\in A}\, f(x)-px &\text{if $f$ is concave},\\
        \\
        \inf\limits_{x\in A}\,f(x)-px &\text{if $f$ is convex},
    \end{cases}
\end{equation*}
where $A^*$ is such that the transform is finite.
\end{definition}
Note also that the subindices of $L_{xp}$ only indicate that the variable $x$ is being changed into the variable $p$. The Legendre transform can also be described as follows (see \cite{LegendreTermo}): If $f$ is twice differentiable then
\begin{equation}\label{EqLegendreEvaluada}
    L_{xp}f(p)= (f(x)-px) \big\vert_{x=(f')^{-1}(p)},
\end{equation}
that is, we must solve $f'(x) = p$ for $x$ and subsitute in $ f(x) - xp$.

The Helmholtz free energy is then defined as
\begin{equation*}
    F(T) = L_{ST}U(T), \quad\text{or simply}\quad F = L_{ST}U,
\end{equation*}
where $U(S)$ is the inverse of $S(E) = k_{B}\;ln\;\Omega(E)$. The last assumption we need to make follows from the postulates of Thermodynamics, and is stated simply as
\begin{equation}\label{EcEstadoT}
    \frac{dS}{dE} = \frac{1}{T},
\end{equation}
to which we will refer as the \textbf{thermal state equation}. Together with equation \eqref{EqLegendreEvaluada} we find that in the Legendre transform of $S$, the parameter $p$ must actually be $\frac{1}{T}$, instead of an arbitrary parameter. Likewise, in the Legendre transform of $U$ the parameter $p$ must be $T$, hence $F$ must have $T$ as its variable.

\subsection{Transformation of the Principle}

We now transform Botlzmann's Principle to show that $F(T)$ can be approximated by $-k_{B}T\;ln\;\mathcal{Z}$. In the following, we will always assume that $\beta = \frac{1}{k_{B}T}$.

By definition (Boltzmann's Principle) $S = k_{B}\;ln\;\Omega$. In order to obtain $F$ from $S$ we must first compute the inverse of $S$ then compute its Legendre transform. The following result will allow us to skip this process.

\begin{theorem}
Let $f:A\subseteq\mathbb{R}\to \mathbb{R}$ be a function that is either convex or concave and $a >0$, let $M_a$ denote the function that multiplies by the constant $a$, that is, 
$M_a(x)=ax$.
Then for the function $f\circ M_a$ it holds that
\begin{equation*}
    L_{xp}(f\circ M_a)(p)=L_{x\frac{p}{a}}f\left(\frac{p}{a}\right).
\end{equation*}

Furthermore, if $f$ is invertible and $p\neq0$ then
\begin{equation*}
    L_{yp}f^{-1}(p)=-pL_{x\frac{1}{p}}f\left(\frac{1}{p}\right).
\end{equation*}
\end{theorem}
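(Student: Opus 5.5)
The plan is to work directly from the sup/inf definition of the Legendre transform and perform a change of variables inside the extremum. I will not use the differentiable description \eqref{EqLegendreEvaluada}, since the statement assumes no differentiability.

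\textbf{First identity.} The function $f\circ M_a$ is defined on $A/a=\{x : ax\in A\}$. Because $a>0$, it is convex (resp.\ concave) exactly when $f$ is, so the same extremum appears on both sides. Suppose $f$ is concave. Then
\begin{equation*}
L_{xp}(f\circ M_a)(p)=\sup_{x\in A/a} f(ax)-px=\sup_{y\in A} f(y)-\frac{p}{a}\,y ,
\end{equation*}
using the bijection $y=ax$ between $A/a$ and $A$. The right-hand side is by definition $L_{x\frac{p}{a}}f(p/a)$. The convex case is identical with $\inf$ in place of $\sup$. I will also note that the two transforms are finite for the same values, so the domains $A^*$ correspond under $p\mapsto p/a$.

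\textbf{Second identity.} Since $f$ is invertible and convex or concave on an interval, it is strictly monotone. Its inverse $f^{-1}$, defined on $f(A)$, is again convex or concave: it is concave if $f$ is increasing and convex, convex if $f$ is increasing and concave, and of the same type as $f$ if $f$ is decreasing. I will verify this by reflecting the epigraph or hypograph across the diagonal. Writing $\mathrm{ext}$ for the relevant $\sup$ or $\inf$, the substitution $y=f(x)$ gives
\begin{equation*}
L_{yp}f^{-1}(p)=\mathrm{ext}_{y\in f(A)}\bigl(f^{-1}(y)-py\bigr)=\mathrm{ext}_{x\in A}\bigl(x-pf(x)\bigr)=\mathrm{ext}_{x\in A}\Bigl(-p\bigl(f(x)-\tfrac{1}{p}x\bigr)\Bigr).
\end{equation*}
Factoring out $-p$ (allowed since $p\neq 0$) turns a $\sup$ into $-p\inf$ when $p>0$, and into $-p\sup$ when $p<0$. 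In either case what remains is $-p$ times $\mathrm{ext}'_{x\in A}\bigl(f(x)-\tfrac1p x\bigr)$, which should equal $-pL_{x\frac1p}f(1/p)$.

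\textbf{Main obstacle.} The delicate step is checking that the extremum $\mathrm{ext}'$ produced after factoring is exactly the one the definition assigns to $f$: $\sup$ if $f$ is concave, $\inf$ if $f$ is convex. I will handle this with a four-case analysis: $f$ increasing or decreasing, and convex or concave. In each case I will combine the convexity type of $f^{-1}$ with the sign of $p$. Where the sign forced by monotonicity conflicts with the convexity type, I expect the mismatched extremum to be infinite. In those cases both sides are excluded from the finite domain $A^*$, so the identity is interpreted on the common domain where both transforms are finite. This matches the paper's convention that $A^*$ is chosen so that the transform is finite. I will also check this against the thermodynamic use: $S$ increasing and concave, $p=T>0$.
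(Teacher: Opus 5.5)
Your first identity and your substitution $y=f(x)$ in the second are exactly the paper's argument. You are also right that the sign of $p$ and the monotonicity of $f$ decide which extremum survives when you factor out $-p$. The paper glosses over this: it asserts that $f^{-1}$ is convex, which holds only for $f$ concave increasing or convex decreasing. It then rewrites $\inf_x\bigl(-p\,(f(x)-\tfrac{x}{p})\bigr)$ as $-p\sup_x\bigl(f(x)-\tfrac{x}{p}\bigr)$, which needs $p>0$.

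The step that fails is your plan for the mismatched cases. Write $g(x)=f(x)-\tfrac{x}{p}$. Your computation shows the extrema match exactly when $f$ is increasing and $p>0$, or $f$ is decreasing and $p<0$. In the other two combinations $g$ is strictly monotone on $A$, and the two sides are $-p\inf_A g$ and $-p\sup_A g$ in some order. So they are governed by the behaviour of $g$ at opposite ends of $A$. Both are infinite when $A=\mathbb{R}$, but nothing forces this in general. Take $f(x)=\sqrt{x}$ on $A=[0,1]$, which is concave and increasing with $f^{-1}(y)=y^2$ convex, and take $p=-1$:
\begin{equation*}
L_{yp}f^{-1}(-1)=\inf_{y\in[0,1]}\bigl(y^2+y\bigr)=0,\qquad -p\,L_{x\frac{1}{p}}f(-1)=\sup_{x\in[0,1]}\bigl(\sqrt{x}+x\bigr)=2.
\end{equation*}
Both values are finite, so $p=-1$ lies in both finiteness domains, and the two sides differ.

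So the second identity as stated for every $p\neq 0$ is false under the paper's sup/inf definition, and no four-case analysis can rescue it. What your substitution does prove is a restricted statement. The identity holds, as an equality in the extended reals and hence with matching finiteness domains, when $f$ is increasing and $p>0$, or when $f$ is decreasing and $p<0$. That is all the paper uses: it applies the theorem to $f=U\circ M_{k_B}$, which is increasing and convex, with $p=\beta>0$.

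A smaller point concerns your strict-monotonicity step. An injective convex function on a closed interval need not be monotone if it jumps at an endpoint, e.g.\ $f(0)=2$ and $f(x)=x$ on $(0,1]$. So that step needs $f$ continuous or $A$ open.
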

\begin{proof}
Suppose $f$ is concave, then $f\circ M_a$ is also concave and by definition
\begin{align*}
    L_{xp}(f\circ M_a)(p)&=\sup_{x\in \frac{1}{a}A } \,f(ax)-px \\&=\sup_{x\in \frac{1}{a}A} \,f(ax)-\frac{p}{a}ax\\&=\sup_{x\in A} \,f(x)-\frac{p}{a}x\\&=L_{x\frac{p}{a}}f\left(\frac{p}{a}\right).
\end{align*}
On the other hand, if we assume that $f$ is invertible then $f^{-1}$ is convex and by definition
\begin{align*}
    L_{yp}f^{-1}(p)&= \inf_{y\in f(A)} \,f^{-1}(y)-py\\&= \inf_{x\in A} \,f^{-1}(f(x))-pf(x)\\&= \inf_{x\in A} \,x-pf(x)\\&=\inf_{x\in A} \, -p(f(x)-\frac{x}{p})\\&=-p( \sup_{x\in A}\,f(x)-\frac{1}{p}x)\\&=-pL_{xp}f\left(\frac{1}{p}\right).
\end{align*}
The case in which $f$ is convex is similar, interchanging infima and suprema.
\end{proof}

If we apply the previous theorem to the function $\frac{S(E)}{k_{B}}$ with inverse $(U\circ M_{k_B})(s)=U(k_Bs)$  and which according to the thermal state equation \eqref{EcEstadoT} has derivative $\frac{d}{dE}(\frac{S}{k_{B}})=\frac{1}{k_B T}=\beta$, we find that
\begin{align*}
    L_{U \beta}\left(\frac{S}{k_B}\right)(\beta)&=L_{U \beta}\left(\ (U \circ M_{k_B})^{-1}\right)(\beta)\\
    &=-\beta L_{S \frac{1}{\beta}}(U\circ M_{k_{B}})\left(\frac{1}{\beta}\right)\\
    &=-\beta L_{S \frac{1}{k_{B}\beta}}U\left(\frac{1}{k_{B}\beta}\right)\\
    &=-\beta L_{ST}U\left(T\right)\\
    &=-\beta F(T).
\end{align*}
Thus
\begin{equation}\label{trans_feo}
    L_{U \beta}\left(\frac{S}{k_B}\right)(\beta)=-\beta F(T).
\end{equation}

On the other hand, substituting the definition of $S$ given by Boltzmann's Principle we find that
\begin{align*}
   L_{U \beta}\left(\frac{S}{k_B}\right)(\beta)&=L_{U \beta}(ln (\Omega))(\beta)\\&=\sup_{E}\,ln\,\Omega(E)-E \beta\\&= \sup_{E}\,ln\,(\Omega(E)e^{-\beta E})\\&=ln(\sup_{E}\,\Omega(E)e^{-\beta E}).
\end{align*}
Together with \eqref{trans_feo} we find that
\begin{equation}\label{feo=ln(sup)}
     -\beta F(T) = ln(\sup_{E}\,\Omega(E)e^{-\beta E}).
\end{equation}

To conclude we employ Laplace's Approximation, for which we make the following definitions. Consider $f,\phi\colon \mathbb{R} \to \mathbb{R}$ such that $\phi < 0$ and
\begin{equation*}
    \int_{0}^{\infty} f(E) e^{\phi(E)}\;dE < \infty.
\end{equation*}
We define the \textbf{Laplace integral} $I(\beta)$ as
\begin{equation*}
    I(\beta) = \int_{0}^{\infty} f(E) e^{\beta \phi(E)}\;dE.
\end{equation*}
Assume that $f(E) e^{\beta\phi(E)}$ attains its maximum in $E_{0} > 0$ and $n\in\mathbb{N}$ is such that $\phi^{(n)}(E_{0}) \neq 0$. We define the $n$-th \textbf{Laplace approximation} $I_{n}(\beta)$ as
\begin{equation*}
    I_{n}(\beta) = - \left(\frac{c(n)}{\beta\;\phi^{(n)}(E_{0})}\right)^{\frac{1}{n}}\sup_{E \geq 0} f(E) e^{\beta\phi(E)},
\end{equation*}
where $c(n)$ is a constant such that $c(1) = 1$. Note that $I_{n}$ is defined only if $\phi^{(n)}(E_{0}) \neq 0$, with the notable exception $n = 0$, for which
\begin{equation*}
    I_{0}(\beta) = - \sup_{E \geq 0} f(E) e^{\beta\phi(E)}.
\end{equation*}
We can now state the approximation result we will employ.

\begin{theorem}[Laplace's Approximation]
Consider $f,\phi\colon \mathbb{R} \to \mathbb{R}$ such that $\phi < 0$ and
\begin{equation*}
    \int_{0}^{\infty} f(E) e^{\phi(E)}\;dE < \infty.
\end{equation*}
For each $n$ for which $I_{n}$ is defined we have
\begin{equation*}
    \lim_{\beta\to\infty} \frac{I(\beta)}{I_{n}(\beta)} = 1.
\end{equation*}
\end{theorem}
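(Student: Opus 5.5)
The plan is to follow the classical Laplace method: localize the integral near the maximum point, rescale, and apply dominated convergence. The statement is underspecified, so I will first make explicit the hypotheses under which I expect it to hold and under which $I_n$ is meaningful.
\begin{itemize}
\item $\phi$ is $C^{n}$ near $E_0$ and attains a \emph{strict} maximum on $[0,\infty)$ at $E_0$, with $\sup_{|E-E_0|\geq\delta}\phi(E)<\phi(E_0)$ for each $\delta>0$.
\item $f$ is continuous at $E_0$ with $f(E_0)>0$.
\item $n$ is the order of the first nonvanishing derivative $\phi^{(n)}(E_0)$.
\item The constant $c(n)$ absorbs the sign and Gamma-function factors coming from $\int e^{\phi^{(n)}(E_0)u^n/n!}\,du$.
\end{itemize}
The case $n=1$, where $c(1)=1$, is consistent only with $E_0$ being an endpoint, approached one-sidedly. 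Indeed, $\int_0^\infty e^{\beta\phi'(E_0)u}\,du=-1/(\beta\phi'(E_0))$, which explains the minus sign in the definition of $I_n$. The first step is to fix these conventions.

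Next I split $I(\beta)$ into the contribution from $|E-E_0|<\delta$ and the contribution from its complement. On the complement I write $e^{\beta\phi}=e^{\phi}e^{(\beta-1)\phi}\leq e^{\phi}e^{(\beta-1)(\phi(E_0)-\eta)}$ for some $\eta>0$. Combined with the hypothesis $\int_0^\infty fe^{\phi}\,dE<\infty$, this bounds the tail by $C e^{\beta\phi(E_0)}e^{-(\beta-1)\eta}$. That is exponentially small compared with $e^{\beta\phi(E_0)}\beta^{-1/n}$.

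Near $E_0$ I use Taylor's theorem, $\phi(E)=\phi(E_0)+\phi^{(n)}(E_0)(E-E_0)^n/n!+o(|E-E_0|^n)$, and substitute $u=\beta^{1/n}(E-E_0)$. Choosing $\delta$ small makes the remainder at most half the leading term, which gives a $\beta$-independent integrable dominating function $Ce^{-|\phi^{(n)}(E_0)||u|^n/(2\,n!)}$. Dominated convergence, together with continuity of $f$ at $E_0$, then yields
\begin{equation*}
\beta^{1/n}e^{-\beta\phi(E_0)}I(\beta)\longrightarrow f(E_0)\int e^{\phi^{(n)}(E_0)u^n/n!}\,du ,
\end{equation*}
where the $u$-integral is taken over the half-line for $n=1$. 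Matching this against $I_n(\beta)$ defines $c(n)$.

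The last step, which I expect to be the main obstacle, is to compare $\sup_{E\geq 0}f(E)e^{\beta\phi(E)}$ with $f(E_0)e^{\beta\phi(E_0)}$. The maximizer of $fe^{\beta\phi}$ depends on $\beta$, whereas the asymptotics are governed by the maximizer of $\phi$.
\begin{itemize}
\item For the lower bound, the supremum is at least $f(E_0)e^{\beta\phi(E_0)}$.
\item For the upper bound, on $|E-E_0|<\delta$ the supremum is at most $\big(\sup_{|E-E_0|<\delta}f\big)e^{\beta\phi(E_0)}$. Off this neighbourhood it is at most $\big(\sup f\big)e^{\beta(\phi(E_0)-\eta)}$, which is negligible for large $\beta$ provided $f$ is locally bounded; if it is not, I would impose this as an additional assumption.
\end{itemize}
Letting $\beta\to\infty$ and then $\delta\to 0$ gives ratio $\to 1$. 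Combining this with the previous step gives $I(\beta)/I_n(\beta)\to 1$.
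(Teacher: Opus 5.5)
The paper gives no argument of its own here; it only refers to \cite{Bender}. Your proof is exactly the classical Laplace method found there: localize at the maximizer of $\phi$, Taylor-expand to the first nonvanishing order, rescale by $u=\beta^{1/n}(E-E_0)$, apply dominated convergence, and show the tail is exponentially small. Your extra hypotheses are necessary, not cosmetic. As literally stated, the theorem is false for $n=0$ and for any $n$ larger than the order of the first nonvanishing derivative; in both cases $|I/I_n|\to 0$. Your final comparison of $\sup_{E\ge 0} f e^{\beta\phi}$ with $f(E_0)e^{\beta\phi(E_0)}$ is precisely what is needed to pass from the textbook normalization to the paper's.

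One convention needs correcting. For even $n$ (in particular at any interior maximum) the real $n$-th root is nonnegative, so $I_n\le 0$ for every real choice of $c(n)$. So ``$c(n)$ absorbs the sign'' works only for odd $n$. For even $n$ you should state the conclusion as $I(\beta)\sim K_n\,\beta^{-1/n}\sup_{E\ge 0} f(E)e^{\beta\phi(E)}$ with $K_n=\int e^{\phi^{(n)}(E_0)u^n/n!}\,du$, without the minus sign.

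There is a genuine gap in your last step. Off $(E_0-\delta,E_0+\delta)$ you bound $fe^{\beta\phi}$ by $(\sup f)\,e^{\beta(\phi(E_0)-\eta)}$. But this complement is unbounded in $[0,\infty)$, so local boundedness of $f$ does not make $\sup f$ finite, and the conclusion can actually fail. Here is a counterexample:
\begin{itemize}
\item Take $\phi(E)=-1-E$, which has a strict maximum at the endpoint $E_0=0$, with $n=1$.
\item Let $f$ be $1$ plus, at each integer $k\ge 1$, a triangular spike of height $e^{k^2}$ and base width $2e^{-k^2}$.
\item Then $f$ is continuous, $f(0)=1$, and $\int_0^\infty fe^{\phi}\,dE<\infty$, so $I(\beta)<\infty$ for $\beta\ge 1$.
\item Yet $f(k)e^{\beta\phi(k)}\ge e^{k^2-\beta(1+k)}$ is unbounded in $k$. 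Hence $\sup_{E\ge0} fe^{\beta\phi}=\infty$ and $I_1(\beta)=+\infty$ for every $\beta$.
\end{itemize}
The right hypothesis is $M=\sup_{E\ge 0}|f(E)|e^{\phi(E)}<\infty$ (e.g.\ $f$ bounded). It gives $|f|e^{\beta\phi}=|f|e^{\phi}e^{(\beta-1)\phi}\le M e^{(\beta-1)(\phi(E_0)-\eta)}$ off the neighbourhood. Similarly, your tail estimate for the integral needs $\int_0^\infty |f|e^{\phi}\,dE<\infty$ (or $f\ge 0$), not just $\int_0^\infty fe^{\phi}\,dE<\infty$. With these two hypotheses added, the argument is complete.

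Finally, your (correct) hypotheses exclude the paper's own use of the theorem. There $\phi(E)=-E$, $E_0=0$, and $f=\Omega$ typically has $\Omega(0)=0$. In that regime the conclusion really fails: for $\Omega(E)=E^k$ with $k\ge 1$ one computes $\mathcal{Z}(\beta)\big/\bigl(\beta^{-1}\sup_{E\ge 0}\Omega(E)e^{-\beta E}\bigr)=k!\,e^k/k^k\neq 1$ for every $\beta$.
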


See \cite{Bender}. Since $\beta = \frac{1}{k_{B}T}$, $k_{B} \propto 10^{-23}$ and $T\propto 10^{2}$, the condition of large $\beta$ is indeed satisfied in the present case.

The order zero approximation (which in general is not a good approximation since the previous theorem can not be applied) implies that
\begin{equation*}
    \sup_{E}\,\Omega(E)e^{-\beta E}
\end{equation*}
can be approximated by
\begin{equation*}
    \int_{0}^{\infty}\omega(E) e^{-E\beta}\;dE = \mathcal{Z}(\beta).
\end{equation*}
Thus, equation \eqref{feo=ln(sup)} implies that
\begin{equation*}
    F \approx -k_{B} T\;ln\;\mathcal{Z}.
\end{equation*}

Since the order zero approximation is not a good approximation, we employ the first order approximation, which states that
\begin{equation*}
    \frac{1}{\beta}\sup_{E}\,\Omega(E)e^{-\beta E} \approx \int_{0}^{\infty}\omega(E) e^{-E\beta}\;dE = \mathcal{Z},
\end{equation*}
hence
\begin{equation*}
    \sup_{E}\,\Omega(E)e^{-\beta E} \approx \beta \mathcal{Z},
\end{equation*}
asymptotically for large $\beta$. Substituting in \eqref{feo=ln(sup)} we find that
\begin{align*}
    F &\approx -k_{B}T \;ln(\beta\mathcal{Z})\\
    &= -k_{B}T \;ln(\mathcal{Z}) - \frac{ln(\beta)}{\beta}.
\end{align*}
For large $\beta$ the second term is negligible, since
\begin{equation*}
    \lim_{\beta\to\infty}\frac{ln(\beta)}{\beta} = 0,
\end{equation*}
and the asymptotic equality
\begin{equation*}
    F = -k_{B}T \;ln(\mathcal{Z})
\end{equation*}
is established.

\subsection{Alternative Definitions}

In physics literature the expression for $\mathcal{Z}$ differs from the one deduced here by a factor of
\begin{equation*}
    \frac{1}{N!h^{3N}},
\end{equation*}
where $N$ is the number of particles of the system and $h$ is planck's constant. This factor is \textquote{added by hand} to avoid Gibbs' paradox (see \cite{Pathria}). This factor can be added to our definition of $\Omega(E)$ without changing the results we obtained, since it is a constant factor that does not alter invariance. This change would carry over to our definition of $\mathcal{Z}$ since the Laplace transform is linear, obtaining the modified expression found in physics literature.

\section{Concluding Remarks}\label{SecConclusiones}

The microcanonical measure solves the problem posed in this paper and allows the use of probability theory for studying hamiltonian systems, reproduces classical Statistical Physics and provides a solid mathematical foundation to the discipline. The problem of short times can be easily dealt with through differential-geometric methods, which are also useful for obtaining results for long times, if regularity hypotheses are imposed. It could be expected that the methods of Geometric Measure Theory could yield similar results without imposing regularity properties on the hamiltonian. It remains to be seen if this is indeed posible, which is left to future reasearch. 

Another road for further investigation is to study whether the measure is conservative (in the sense of Ergodic Theory) or ergodic, or conditions under which either property is achieved. This would allow the application of the classical Ergodic Theorems to obtain further results. This is also of great interest since it is directly related to the Ergodic Hypothesis. These questions are, however, much more subtle and require careful consideration.

Finally, while we did not solve the specific problem posed in Simon's second problem, the results of this paper show that ergodicity is not needed to formulate classical Statistical Mechanics in a mathematically rigorous way. Thus, this work offers an alternative solution to Simon's fundamental question, not by proving ergodicity, but by showing it is not necessary.

\section*{Acknowledgements}

This work was partially supported by DGAPA UNAM through project PAPIME PE100726.\\

\noindent \textbf{Data Availability} 
Data sharing is not applicable to this article as no new data were created or analyzed in this study.
\\

\noindent
\textbf{Declarations}\\

\noindent
\textbf{Conflict of interest} The authors declare that they have no conflict of interest.

\nocite{*}
\bibliographystyle{abbrv}
\bibliography{main}
\addcontentsline{toc}{section}{Bibliography}

\end{document}